\newcommand{\Res}{{\mathrm{Res}}}
\newcommand{\C}{{\mathbb{C}}}
\newcommand{\V}{{{|V\rangle}}}
\newcommand{\Z}{{\mathbb{Z}}}
\newcommand{\bt}{{\mathbf t}}
\newcommand{\half}{\frac12}
\definecolor{yellow}{rgb}{1,.7,0}
\definecolor{pkured}{rgb}{0.55,0,0}
\newcommand{\be}{\begin{equation*}}
	\newcommand{\ee}{\end{equation*}}
\newcommand{\beq}{\begin{equation}}
	\newcommand{\eeq}{\end{equation}}
\numberwithin{equation}{section}
\newtheorem{cor}{Corollary}[section]
\newtheorem{lem}[cor]{Lemma}
\newtheorem{prop}[cor]{Proposition}
\newtheorem{thm}[cor]{Theorem}
\newtheorem{defn}[cor]{Definition}
\newtheorem{ex}[cor]{Example}
\theoremstyle{remark}
\newtheorem{rmk}[cor]{Remark}
\author{Shuai Guo}
\email{guoshuai@math.pku.edu.cn}
\address{Shuai Guo, School of Mathematical Sciences, Peking University, Beijing}
\author{Ce Ji}
\email{cji@tsinghua.edu.cn}
\address{Ce Ji,  Department of Mathematical Sciences, Tsinghua University, Beijing}
\author{Chenglang Yang}
\email{yangcl@pku.edu.cn}
\address{Chenglang Yang, Institute for Math and AI, Wuhan University, Wuhan 430072, China}
\address{Hua Loo-Keng Center for Mathematical Sciences,
	Academy of Mathematics and Systems Science,
	Chinese Academy of Sciences,
	Beijing}
\title[The bilinear fermionic form for KP and BKP hierarchies]{The bilinear fermionic form for KP and BKP hierarchies}
\begin{document}
	\maketitle
	
	\begin{abstract}
		For a tau-function of the KP or BKP hierarchy, we introduce the notion of lifting operator and derive an equation connecting the corresponding fermionic two-point function and fermionic one-point function through the lifting operator.
		This provides an effective approach to determine the fermionic two-point function of the tau-function from the lifting operator and the fermionic one-point function.
		As practical applications, we derive concise formulas for the  fermionic two-point functions of several models, like the $r$-spin model and the Br{\' e}zin--Gross--Witten model, which respectively  serve as examples for KP and BKP tau-functions.
	\end{abstract}

	\setcounter{section}{0}
	\setcounter{tocdepth}{1}
	\tableofcontents

	\section{Introduction}
	
	One of the central topics in modern enumerative geometry is to discuss how an enumerative theory is governed by certain integrable systems. The first attempt dates back to the celebrated Witten's conjecture initially proved by Kontsevich \cite{W90,K}, which states that the generating function of certain intersection numbers over the moduli spaces of stable curves (now known as the Kontsevich--Witten tau-function), is governed by the KdV hierarchy.
	A natural generalization is addressed to the $r$-spin theory. The tau-function of the $r$-spin model stores intersection numbers over the moduli spaces of $r$-spin curves and satisfies the higher KdV hierarchy \cite{W,FSZ}.
	Many other interplays between enumerative theories and integrable hierarchies were also studied, and until recently, most of the integrable systems can be regarded as certain reductions of KP hierarchy, including the Kontsevich--Witten and the $r$-spin case.
	Recently, relations between enumerative theories and the BKP hierarchy have been intensively studied \cite{MM,A23,LY2,LY3,LY1}.
	These developments have led us to study enumerative theories using methods based on these two integrable systems, KP and BKP hierarchies.
	
	Among the techniques for KP and BKP hierarchies, the fermionic approach is well established via the boson-fermion correspondence (see, for example, \cite{DJM}).
	A typical way introduced by Kyoto school in studying tau-functions of KP hierarchy (BKP hierarchy) is to assign it to a point in the Sato Grassmannian (isotropic Sato Grassmannian). For a tau-function $\tau(\mathbf{t})$ in the big cell, i.e. $\tau({\bf 0})\ne 0$, this can be characterized by the so-called canonical basis. 
	This canonical basis of a tau-function $\tau(\mathbf{t})$ can be packed into a generating series, called the fermionic two-point function of $\tau(\mathbf{t})$.
	In essence, all properties of $\tau(\mathbf{t})$ should be derived from its fermionic two-point function. Some discussions can be seen in \cite{TW, O02, Z15, DLM, DYZ, WY}.
	
	In this paper, our main goal is to study the fermionic two-point function for tau-functions of KP and BKP hierarchies (see Section~\ref{sec:main A} and \ref{sec:main B}). The key to our result is by introducing the {\it lifting operator}, which is a specially chosen Kac--Schwarz operator and generates the admissible basis of the (isotropic) Sato Grassmannian for the corresponding tau-function
	(see the Definitions \ref{def:lifting} and \ref{def:lifting B} for more details).
	Via the lifting operator of a tau-function, a relation between the fermionic two-point function and its (dual) fermionic one-point function is obtained.
	The main result concerning the KP hierarchy is the following
	\begin{thm}\label{thm:intro A}
		Suppose $l$ is a lifting operator for the tau-function $\tau_{KP}$ of the KP hierarchy,
		then
		\begin{equation}\label{int:keyequation}
			(l_u^*-l_v) \cdot \Psi(u,v)  = \Psi^*(u)\Psi(v),
		\end{equation}
		where $\Psi(u), \Psi^*(v)$ and $\Psi(u,v)$ are fermionic one-point function, dual fermionic one-point and fermionic two-point function of $\tau_{KP}$ respectively.
		See the Theorem \ref{thm:main KP} for the detailed version.
	\end{thm}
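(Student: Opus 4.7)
The plan is to work in the Sato Grassmannian picture. Let $W$ be the subspace of $\C((z^{-1}))$ attached to $\tau_{KP}$. The hypothesis that $l$ is a lifting operator gives an admissible basis $\{\phi_i\}_{i\geq 0}$ of $W$ with $\phi_i = l^i \phi_0$, where $\phi_0$ is the distinguished seed. The fermionic two-point function then admits a bilinear expansion
\[
	\Psi(u,v) = \sum_{i\geq 0} \phi_i(v)\, \phi_i^{\vee}(u)
\]
in terms of this basis and its residue-pairing dual, and the one-point functions $\Psi(v)$ and $\Psi^*(u)$ should be identified, up to a normalization dictated by the polarization, with the distinguished generators $\phi_0(v)$ and $\phi_0^{\vee}(u)$.

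The proof then proceeds by a telescoping calculation. Using the shift property $l\phi_i = \phi_{i+1}$ one finds
\[
	l_v\,\Psi(u,v) \;=\; \sum_{i\geq 0} \phi_{i+1}(v)\, \phi_i^{\vee}(u) \;=\; \sum_{j\geq 1} \phi_j(v)\, \phi_{j-1}^{\vee}(u),
\]
and a parallel manipulation using the formal adjoint action on the dual basis yields a similar expression for $l_u^*\,\Psi(u,v)$. The two sums agree in the ``bulk'', but their difference picks up a boundary contribution at $i = 0$, coming from the asymmetry between the forward shift on the primal side and the backward shift on the dual side at the lowest index. Collecting this boundary term gives $\phi_0(v)\,\phi_0^{\vee}(u) = \Psi^*(u)\,\Psi(v)$, which is the claimed identity.

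The main obstacle I expect is the careful formulation of the adjoint action of $l$ on the dual basis and the verification that the boundary term factors cleanly as $\Psi^*(u)\Psi(v)$ with the correct normalization; this depends sensitively on the precise definition of the lifting operator, the choice of polarization, and the residue pairing used. A potentially cleaner alternative is to reformulate everything in pure fermionic variables via the boson-fermion correspondence: write $\Psi(u,v) = \langle 0|\psi^*(u)\psi(v)\,g|0\rangle/\langle 0|g|0\rangle$ where $g$ implements $\tau_{KP}$, translate the lifting property into an operator identity of the form $\ell\, g = g\,\tilde{\ell}$ for a fermionization $\ell$ of $l$ and some $\tilde{\ell}$ with controlled action on the vacuum, and then derive the identity from the anticommutator $\{\psi(u),\psi^*(v)\} = \delta(u-v)$. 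This fermionic strategy has the additional advantage of adapting naturally to the BKP analogue treated in Section~\ref{sec:main B}.
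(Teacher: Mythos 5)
Your Grassmannian/Christoffel--Darboux strategy is a genuinely different route from the paper (which works purely fermionically: it writes $(l_u^*-l_v)\Psi(u,v)$ as a commutator $\langle 0|[\hat l,\psi^*(u)\psi(v)]|V\rangle/\langle 0|V\rangle$ via Lemma~\ref{thm_wpsicom}, then uses $\hat l|V\rangle=c_l|V\rangle$, the identity $\langle 0|\hat l=\langle 0|\psi^*_{\half}\psi_{-\half}$ of Lemma~\ref{lem:<0|l}, Wick's theorem, and the cancellation $c_l=b_{0,0}$). However, as written your argument has gaps exactly at the points that carry the content. First, the bilinear expansion $\Psi(u,v)=\sum_i\phi_i(v)\phi_i^\vee(u)$ and the identification $\phi_0^\vee(u)\leftrightarrow\Psi^*(u)$ are asserted, not proved. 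The expansion that is actually available (from \eqref{eqn:bki} and \eqref{canobasis}) is in terms of the \emph{canonical} basis with dual family essentially the monomials $u^{-i-1}$; but $l_u^*$ does not shift monomials backward, so the telescoping does not close in that presentation. If instead you take a genuinely dual family (the dual point of the Grassmannian, i.e.\ the dual wave-function side), then the adjoint acts as a backward shift only \emph{modulo the annihilator of $W$}, and the discarded terms in $\sum_i\phi_i(v)\,(l_u^*\phi_i^\vee)(u)$ are not controlled by anything you say; likewise $\Psi^*(u)=1-\sum_i b_{i,0}u^{-i-1}$ involves the whole first column of affine coordinates, so its identification with ``the dual of $\phi_0$'' is precisely the dual-wave-function statement that needs proof. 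You flag this as ``the main obstacle,'' but flagging it does not close it.

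Second, the structural constraints that make the theorem true with coefficient exactly $1$ never enter your argument. The paper's Definition~\ref{def:lifting} and Lemma~\ref{lzform} force the degree-one part of $l$ to be $z+\sum_{k\geq1}c_k(z\partial_z)^kz$ (unit coefficient of $z$) with the remainder of non-positive degree, and this is used crucially through $\langle 0|\hat l=\langle 0|\psi^*_{\half}\psi_{-\half}$ and through $c_l=b_{0,0}$ in \eqref{eqn:l action = c}. Your hypotheses as used (``$\phi_i=l^i\phi_0$ is an admissible basis'') are invariant under rescaling $l\mapsto\lambda l$, while the claimed identity is not, so some normalization input must appear in the proof; similarly the constants $a_k$ in Definition~\ref{def:lifting} (you have $l^k\phi_0=a_k\phi_k$, not $\phi_k$) and the singular part $\frac{1}{u-v}$ of the kernel have to be tracked through the telescoping. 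Your one-sentence fermionic ``alternative'' is essentially the paper's actual proof, but the steps that make it work (the vacuum identity for $\hat l$, the Wick factorization of the four-point insertion, and the cancellation of $c_l$ against $b_{0,0}$) are not supplied. As it stands, the proposal is a plausible outline with the decisive steps missing.
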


In many enumerative theories, $\Psi(u)$ and $\Psi^*(v)$ are widely studied and computed.
They are related to special values of wave and dual wave functions (see, for example, Section 6 in \cite{DJM}).
It will be interesting to obtain an explicit and compact formula for the fermionic two-point function $\Psi(u,v)$ since it records all the information of the corresponding tau-function.
Theorem \ref{thm:intro A} provides a powerful constraint to $\Psi(u,v)$, which generally determines $\Psi(u,v)$.
For example,
we can use the Theorem \ref{thm:intro A} to derive the fermionic two-point function of the $r$-spin model.
\begin{cor}\label{thm: intro r-spin}
	The fermionic two-point function for the $r$-spin theory is given by
	\begin{align}\label{eqn:main r-spin}
		\Psi(u,v)=\frac{\big(\sum_{a=0}^{r-1}(l_u^*)^a (l_v)^{r-1-a}\big) \cdot \Psi^*(u) \Psi(v)} {u^r-v^r},
	\end{align}
	where $l$ is the lifting operator for $r$-spin model given in equation \eqref{eqn:l r-spin} and $l^*$ is the adjoint operator of the lifting operator $l$.
	See the Corollary \ref{thm: r-spin} for more details.
\end{cor}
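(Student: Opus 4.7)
The plan is to deduce the corollary from Theorem \ref{thm:intro A} together with the specific algebraic structure of the lifting operator $l$ for the $r$-spin model. By the theorem, the fermionic two-point function for the $r$-spin tau-function satisfies
\begin{equation*}
(l_u^* - l_v)\cdot \Psi(u,v) = \Psi^*(u)\Psi(v),
\end{equation*}
so the task is to invert the first-order operator $l_u^* - l_v$ on the right-hand side. The essential trick is the classical factorization identity $x^r - y^r = (x - y)\sum_{a=0}^{r-1} x^a y^{r-1-a}$, which I would lift to the operator setting.

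First I would recall from equation \eqref{eqn:l r-spin} the explicit form of the lifting operator $l$ for the $r$-spin model and verify (or quote) the key property that $l^r$ acts as multiplication by $z^r$, and correspondingly $(l^*)^r$ acts as multiplication by $z^r$ after passing to the adjoint. This is the defining $r$-reduction feature of the $r$-spin Kac--Schwarz operator, so it should already be built into the construction in Section~\ref{sec:main A}. Second, since $l_u^*$ operates on the $u$-variable and $l_v$ operates on the $v$-variable, they commute with each other, which legitimizes the operator factorization
\begin{equation*}
(l_u^*)^r - (l_v)^r = (l_u^* - l_v)\sum_{a=0}^{r-1}(l_u^*)^a (l_v)^{r-1-a}.
\end{equation*}

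Third, I would apply the operator $\sum_{a=0}^{r-1}(l_u^*)^a (l_v)^{r-1-a}$ to both sides of the key equation from Theorem \ref{thm:intro A}. Using commutativity to move this operator past $(l_u^* - l_v)$, the left-hand side becomes $\big((l_u^*)^r - (l_v)^r\big)\Psi(u,v) = (u^r - v^r)\Psi(u,v)$, by the reduction property just recalled. Dividing through by $u^r - v^r$ then yields the desired formula \eqref{eqn:main r-spin}, provided the quotient is well defined on the space of series in which $\Psi(u,v)$ lives.

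The main obstacle I anticipate is verifying these two points cleanly: (i) that the $r$-th power of the $r$-spin lifting operator (and its adjoint) truly reduces to multiplication by $z^r$ on the admissible basis, and (ii) that division by $u^r - v^r$ does not introduce spurious singularities, i.e.\ that the right-hand side of \eqref{eqn:main r-spin} is automatically divisible by $u^r - v^r$ in the appropriate expansion. The first point is a direct computation from the definition of $l$ in \eqref{eqn:l r-spin} and the structure of the $r$-reduced Sato Grassmannian; the second follows because the operator identity forces the numerator to vanish whenever $u^r = v^r$.
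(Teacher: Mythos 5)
Your overall strategy --- factor $(l_u^*)^r-(l_v)^r=(l_u^*-l_v)\sum_{a=0}^{r-1}(l_u^*)^a(l_v)^{r-1-a}$ and combine this with Theorem \ref{thm:main KP} --- is close in spirit to the paper, but the step you lean on is unjustified, and it is in fact the whole content of the statement. The ``reduction property'' you recall is only valid on the one-point functions: $l_z^r\cdot\Psi(z)=z^r\Psi(z)$ and $(l_z^*)^r\cdot\Psi^*(z)=z^r\Psi^*(z)$ are the quantum spectral curve relations \eqref{eqn:qsc} and \eqref{eqn:qsc dual} specialized to $x(z)=z^r/r$. As an operator identity $l_z^r=z^r$ is false (already $l_z=z-\hbar z^{-r}(z\partial_z+\tfrac{1-r}{2})$ contains derivatives, so $l_z^r$ does), and the relation does not even hold on the whole admissible basis, only on $\varphi_0$. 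Consequently your key assertion $\big((l_u^*)^r-(l_v)^r\big)\Psi(u,v)=(u^r-v^r)\Psi(u,v)$ does not follow from what you recalled: the single-variable eigenrelation $l_v^r\cdot\Psi(u,v)=v^r\Psi(u,v)$ is simply false (for $r=2$ the two-point function is an Airy-type kernel $K$, and $(\partial_y^2-y)K(x,y)\neq 0$; only the difference identity $(\partial_x^2-\partial_y^2)K=(x-y)K$ holds). The two-variable difference identity is true, but, given Theorem \ref{thm:main KP}, it is equivalent to the corollary you are trying to prove, so invoking it without proof makes the argument circular. It could be established independently, e.g.\ from the $r$-reduction of the tau-function ($\widehat{z^r}|V\rangle=c\,|V\rangle$) by a Wick computation parallel to the proof of Theorem \ref{thm:main KP}, but that produces a rank-$r$ sum of auxiliary one-point functions which must then be identified with $\sum_{a}(l_u^*)^a(l_v)^{r-1-a}\cdot\Psi^*(u)\Psi(v)$ --- none of which is in your sketch.

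The paper argues in the opposite direction: the corollary is the special case $V(z)=\tfrac{z^{r+1}}{r(r+1)}$ of Proposition \ref{thm:GKM}, whose proof takes the right-hand side of \eqref{eqn:main r-spin} as a candidate $\bar{\Psi}(u,v)$, shows $(l_u^*-l_v)\cdot\bar{\Psi}(u,v)=\Psi^*(u)\Psi(v)$ using only the commutator $[\,l_u^*-l_v,\,x(u)-x(v)\,]=0$ (from $[l_V,x]=\hbar$, $[l_V^*,x]=-\hbar$) together with \eqref{eqn:qsc}--\eqref{eqn:qsc dual} applied where they are valid, namely on $\Psi^*(u)\Psi(v)$, and then invokes uniqueness of the solution of \eqref{keyequation} with the prescribed leading term $\tfrac{1}{u-v}$. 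Your concern (ii) about dividing by $u^r-v^r$ is exactly what this uniqueness step takes care of. To repair your ``forward'' route you must supply an independent proof of the difference identity; otherwise, recast the argument as a verification that the closed formula satisfies \eqref{keyequation}, as the paper does.
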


	Indeed, we obtain a compact formula for the fermionic two-point function of the generalized Kontsevich model
	and the $r$-spin model is its special case (see Section \ref{GKM}).
	When $r=2$, a formula for $\Psi(u,v)$ of the Kontsevich--Witten tau-function was obtained in \cite{TW,O01,O02} (see also \cite{Z13,BY,Z15,DYZ} for other methods),
	which is related to the famous Airy kernel.
	General $r$ case and generalized Kontsevich model were studied in literature.
	In \cite{ACM,A21},
	they obtained formulas for $\Psi(u,v)$ of these models as asymptotic expansions of some integrals.
	In \cite{DLM}, they derived a recursive formula for coefficients of expansion of $\Psi(u,v)$ and applied it to compute several leading terms.
	Corollary \ref{thm: intro r-spin} gives a simple and closed formula for $\Psi(u,v)$.
	
	\medskip
	
	For the BKP hierarchy case, we also introduce the notion of the lifting operator of B-type and then establish an analog of Theorem \ref{thm:intro A} for B-type fermionic two-point function.
	Application to the Br{\' e}zin--Gross--Witten model is also obtained.
	More details can be seen in Sections \ref{sec:main B} and \ref{sec:BGW}.
	Our main result is
	\begin{thm}
		\label{thm:intro B main}
		Suppose $l$ is a lifting operator for the tau-function $\tau_{BKP}$ of the BKP hierarchy,
		and $\tilde{l}$ is its anti-symmetrization.
		Then we have
		\begin{align}
			(\tilde{l}_u+\tilde{l}_v)\cdot\Psi^B(u,v)
			=\Psi^B(u)\tilde{\Psi}^B(v)
			-\Psi^B(v)\tilde{\Psi}^B(u),
		\end{align}
		where $\Psi^B(u), \tilde{\Psi}^B(v)$ are the first and second fermionic one-point functions, and $\Psi^B(u,v)$ is the fermionic two-point function of $\tau_{BKP}$ respectively.
		See the Theorem \ref{thm:B main} for the detailed version.
	\end{thm}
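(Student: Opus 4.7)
The plan is to adapt the KP argument behind Theorem \ref{thm:intro A} to the B-type setting, exploiting the antisymmetry forced by isotropicity. First I would expand $\Psi^B(u,v)$ as a bilinear generating series in the admissible basis $\{\phi_n(u)\}_{n\geq 0}$ of the isotropic Sato Grassmannian attached to $\tau_{BKP}$. By the defining property of the lifting operator (Definition \ref{def:lifting B}), this basis is generated from a seed $\phi_0$ by iterated action of $l$ modulo lower-order corrections, so $l$ acts essentially as an index-raising shift. The one-point functions $\Psi^B(u)$ and $\tilde{\Psi}^B(u)$ would be expanded as linear series in the same basis; the appearance of two distinct one-point functions reflects the two natural pairings afforded by the neutral fermion in the B-type Clifford algebra.

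The core computation is then to apply $\tilde{l}_u+\tilde{l}_v$ to the bilinear expansion. Because $\Psi^B(u,v)$ is antisymmetric in $u\leftrightarrow v$ as a consequence of isotropicity, only the antisymmetrized part of the lifting action contributes nontrivially, which is precisely what $\tilde{l}$ is built to extract. The shift action then telescopes in the index: interior terms cancel in antisymmetric pairs, and what remains are boundary contributions coming from the seed $\phi_0$. These boundary contributions should reassemble exactly as the rank-one antisymmetric combination $\Psi^B(u)\tilde{\Psi}^B(v)-\Psi^B(v)\tilde{\Psi}^B(u)$, matching the right-hand side of the claim. In spirit this parallels how $(l_u^\ast-l_v)\cdot\Psi(u,v)$ telescopes to $\Psi^\ast(u)\Psi(v)$ in the KP case, with the difference $l_u^\ast-l_v$ being replaced by the sum $\tilde{l}_u+\tilde{l}_v$ because in B-type the self-dual pairing forces the antisymmetrization to play the role that A-type duality played before.

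The main obstacle will be the careful bookkeeping of signs and the treatment of the neutral fermion mode, which has no KP analog and which typically introduces half-integer normalizations and a distinguished "middle" basis vector. Unlike the KP setting where $l$ and $l^\ast$ are genuinely distinct operators, here a single operator $l$ must be antisymmetrized by hand, and one has to verify that $\tilde{l}_u+\tilde{l}_v$ preserves isotropy at each step so that the telescoping remains inside the admissible basis. A related subtlety is to identify which boundary term of the telescoping produces $\Psi^B$ and which produces $\tilde{\Psi}^B$, and to confirm that no diagonal self-contraction from the $\phi_0$ mode survives. If the direct telescoping turns out to be too delicate, a fallback is to apply the B-type fermionic Wick formula together with the BKP bilinear identity to reduce the claim to a Pl\"ucker-type relation, then verify that relation basis-by-basis using the lifting characterization.
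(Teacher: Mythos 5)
Your proposed route has a genuine gap at its foundation. The ``telescoping in an admissible basis'' mechanism assumes that the B-type lifting operator acts as an index-raising shift generating the basis from a seed, but Definition~\ref{def:lifting B} contains no such clause: in the BKP setting a lifting operator is only required to satisfy the Kac--Schwarz eigenvalue property $\hat{l}^B|V\rangle=c_l|V\rangle$ together with a shape condition (energy-one part equal to $z$, plus non-positive-energy terms). No shift action on a basis is available, so the ``interior terms cancel, boundary terms survive'' computation cannot even be set up. You also misattribute why the antisymmetrization $\tilde{l}$ enters: it is \emph{not} because $\Psi^B(u,v)$ is antisymmetric in $u\leftrightarrow v$ (that holds only modulo the singular kernel $\tfrac{u-v}{2(u+v)}$), but because of the operator-level identity $[\hat{l}^B,\phi(z)]=-\tilde{l}_z\cdot\phi(z)$ of Lemma~\ref{lem:B hatl-l}, which reflects the $\iota$-antisymmetrization forced by the redundancy of the neutral-fermion modes ($\phi_{-n}$ versus $\phi_n$), an antisymmetrization in the single variable $z$, not in the pair $(u,v)$.

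More importantly, the step where ``boundary contributions reassemble exactly'' into $\Psi^B(u)\tilde{\Psi}^B(v)-\Psi^B(v)\tilde{\Psi}^B(u)$ is precisely the content that must be proved, and your sketch asserts it rather than derives it. The actual argument is a short operator computation: the commutator lemma turns $(\tilde{l}_u+\tilde{l}_v)\Psi^B(u,v)$ into $-\langle0|\hat{l}^B\phi(u)\phi(v)-\phi(u)\phi(v)\hat{l}^B|V\rangle$; the eigenvalue property \eqref{eqn:l = c B} handles the right action, while the computation $\langle0|\hat{l}^B=\langle0|\phi_0\phi_{-1}$ (equation \eqref{eqn:<0|lB}, valid exactly because the energy-one part of $l$ is $z$ and the normally ordered non-positive-energy parts annihilate $\langle0|$) reduces everything to the four-point function $-\langle0|\phi_0\phi_{-1}\phi(u)\phi(v)|V\rangle$; Wick's theorem (legitimate since $|V\rangle=c\,e^A|0\rangle$ in the big cell) factorizes this into $-a_{1,0}\Psi^B(u,v)+\Psi^B(u)\tilde{\Psi}^B(v)-\Psi^B(v)\tilde{\Psi}^B(u)$, and the final cancellation $c_l=a_{1,0}$ removes the diagonal term. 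Nothing in your argument produces or cancels the constants $c_l$ and $a_{1,0}$, nor fixes which of $\phi_0,\phi_{-1}$ is responsible for $\Psi^B$ versus $\tilde{\Psi}^B$ --- a point you yourself flag but leave open. Your fallback via the B-type Wick formula points in the right direction, but it does not require the BKP bilinear identity or any Pl\"ucker-type relation; what it requires, and what your proposal never isolates, are exactly the two facts $\hat{l}^B|V\rangle=c_l|V\rangle$ and $\langle0|\hat{l}^B=\langle0|\phi_0\phi_{-1}$.
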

	
In the last part of this paper (Section~\ref{futherexamples}), we gather a number of well-known and typical examples. Then, we apply our main results (Theorems \ref{thm:intro A} and \ref{thm:intro B main}) to these examples. Amazingly, we find that our formula can always function effectively: It computes the two-point function simply by using the one-point function and the lifting operator. This provides more evidence that the lifting operator plays a crucial role in this entire framework.

	\medskip
	
	The rest of this paper is organized as follows.
	In Section \ref{fermions}, we review the charged fermions and neutral fermions,
	together with the KP and BKP hierarchies.
	In Section \ref{walgebra}, we review the constructions of $W_{1+\infty}$ algebra and its BKP analogue $W^B_{1+\infty}$.
	We introduce the lifting operators and prove our main results, Theorems \ref{thm:intro A} and \ref{thm:intro B main} in Section \ref{mainresult}.
	Finally in Section \ref{examples},
	we compute the fermionic two-point functions of the $r$-spin model and the BGW model, which serve as examples of KP and BKP hierarchies respectively. We also list various kinds of other models where our main theorem works perfectly.

\vspace{.2in}
{\em Acknowledgements}.
The first author is partially supported by the National Key R \& D Program of China (No. 2023YFA1009802) and NSFC (No. 12225101).
The second author is partially supported by Tsinghua University (No. 2024SM349).
The third author is partially supported by the NSFC (No.  12288201, 12401079),
the China Postdoctoral Science Foundation (No. 2023M743717)
and China National Postdoctoral Program for Innovative Talents (No. BX20240407).
\vspace{.2in}

	\section{Fermions and neutral fermions}\label{fermions}
	In this section, we review the KP and BKP hierarchy, as well as their boson-fermion correspondences.
	
	\subsection{Free Fermions}
	
	The charged fermions are operators $\{\psi_k,\psi^*_{k}\}_{k\in\mathbb{Z}+\frac{1}{2}}$
	satisfying the following anti-commutation relations:
	\begin{equation}\label{eqn:fercomm}
		[\psi_i,\psi_j^*]_+ = \delta_{i,j}\cdot\mathrm{id},\ \ \ \
		[\psi_i,\psi_j]_+ = [\psi^*_i,\psi_j^*]_+=0,
	\end{equation}
	where the anti-commutator is $[a,b]_+:=ab+ba$.
	Together with the identity operator, fermions form an infinite dimensional Lie algebra.
	A kind of realization of this Lie algebra is based on the semi-infinite wedge product.
	
	We start with an infinite dimensional complex vector space $V=z^{1/2}\cdot\C[z,z^{-1}]]$, with a linear basis $\{\underline{k+\frac{1}{2}}=z^{k+\frac{1}{2}}\}_{k\in \mathbb Z}$ denoted by half integers.
	The fermionic Fock space $\mathcal{F}=\Lambda^{\frac{\infty}{2}}V$ is then defined as the semi-infinite wedge space of $V$.
	To be precise, let $S=\{s_1>s_2>\ldots\}$ be an admissible ordered infinite set of half-integers in $\mathbb{Z}+\half$,
	which means the following two sets are both finite,
	$$
	|S_+| = | \{s_i\in S, s_i>0\}|<\infty,\quad  |S_-|=  | \{s_i\in \mathbb{Z}_{\leq 0}-\tfrac{1}{2}, s_i\notin S\}|<\infty.
	$$
	Then the fermionic Fock space $\mathcal{F}$ is spanned by vectors characterized by $S$,
	$$
	\mathcal{F}=\mathrm{span}\left\{\sum_{S\  \mathrm{admissible}} c_{S}\cdot |S\rangle\right\}, \quad c_{S}\in \mathbb{C},
	$$
	where
	$$
	|S\rangle = \underline{s_1}\wedge \underline{s_2}\wedge\underline{s_3}\wedge\cdots.
	$$
	The fermionic Fock space admits a decomposition with respect to the charge of vectors. For $|S\rangle\in \mathcal{F}$, its charge is defined to be the integer
	$$
	\mathrm{charge}(|S\rangle):=|S_+|-|S_-|.
	$$
	Then the decomposition of $\mathcal{F}$ is
	$$\mathcal{F}=\bigoplus_{n\in \mathbb{Z}}\mathcal{F}^{(n)},$$
	where $\mathcal{F}^{(n)}$ is spanned by vectors $|S\rangle$ of charge $n$.
	The vacuum state of charge 0 is
	$$|0\rangle:=\underline{-\half}\wedge\underline{-\frac{3}{2}}\wedge\underline{-\frac{5}{2}}\wedge\cdots.$$
	
	The action of fermions $\psi_k$ on a basis of $\mathcal{F}$ are defined by
	$$
	\psi_k\cdot  |S\rangle := \underline{k} \wedge  |S\rangle,\quad k\in\Z+\half,
	$$
	and the operators $\psi^*_k$ are adjoint operators of $\psi_k$,
	whose actions are defined by
	\begin{equation*}
		\psi_k^*\cdot  |S\rangle:=\left\{
		\begin{split}
			&(-1)^{l+1}\underline{s_1}\wedge \underline{s_2}\wedge\cdots\wedge\widehat{\underline{s_l}}\wedge\cdots, \quad &&\mathrm{if}\ s_l=k\ \mathrm{for\ some\ }l;  \\
			&0, \quad &&\mathrm{otherwise}.
		\end{split}\right.
	\end{equation*}
	Since we have
	\[\psi_{-k} |0\rangle = \psi^*_{k}|0\rangle
	=0,\ \ k\in\mathbb{Z}_{+}+\frac{1}{2},
	\]
	the operators $\{\psi_r,\psi^*_{-r}\}_{r>0}$ are called the creators,
	and $\{\psi_r,\psi^*_{-r}\}_{r<0}$ are called the annihilators.
	One can notice that the Fock space $\mathcal{F}$ can be generated by the actions of creators on the vacuum state $|0\rangle$.

	There is a standard inner product over the fermionic Fock space $\mathcal{F}$.
	An equivalent formalism is given by the following.
	Let $\varphi_i$ be a linear summation of fermions. For any expression
	\begin{align*}
		\langle0|\varphi_1\cdots\varphi_n|0\rangle,
	\end{align*}
	one assigns a complex number called the vacuum expectation value.
	The vacuum expectation values are determined by the following conditions:
	linearity for each $\varphi_i$,
	the initial value $\langle0|0\rangle=0$,
	the action of left annihilators $\langle0|\psi_k=\langle0|\psi^*_{-k}=0,\ k\in\mathbb{Z}_++\half$,
	and the anti-commutation relations \eqref{eqn:fercomm}.
	An explicit construction of the vacuum expectation value and inner product can be seen in \cite{DJM}.
	For computation, we only need
	\begin{align}\label{eqn:<2psi> KP}
		\begin{split}
		\langle0|\psi^*_k\psi_j|0\rangle
		=&\langle0|\psi_{-k}\psi^*_{-j}|0\rangle=\delta_{k,j}\delta_{j>0}.,\\
		\langle0|\psi\psi|0\rangle
		=&\langle0|\psi^*\psi^*|0\rangle=0
		\end{split}
	\end{align}
	and in general,
	there is the Wick theorem (see Theorem 4.1 in \cite{DJM}), which decomposes general vacuum expectation values into the above simple cases \eqref{eqn:<2psi> KP}.
	
	The fermionic normal-ordering of two fermions $\varphi_1,  \varphi_2$ is defined as
	\begin{equation}\label{def_normalorder}
		:\varphi_1\varphi_2: =
		\varphi_1 \varphi_2
		-\langle0|   \varphi_1 \varphi_2 |0\rangle,
	\end{equation}
	where $\varphi_1, \varphi_2$ can represents $\psi_i$ or $\psi^*_i$.
	It is also convenient to define the fermionic fields, which are generating series of the fermionic operators
	\begin{equation}\label{defferfield}
		\psi(z) := \sum_{k\in\mathbb{Z}+\frac{1}{2}}z^{k-\frac{1}{2}}\psi_k,\quad \psi^*(z):=\sum_{k\in\mathbb{Z}+\frac{1}{2}}z^{-k-\frac{1}{2}}\psi^*_k.
	\end{equation}
 An easy calculation employing equations \eqref{eqn:fercomm} and \eqref{def_normalorder} shows the following operator product expansion (OPE)
\begin{equation}\label{ferppd}
	\psi^*(u)\psi(v)= \frac1{u-v}  + :\psi^*(u)\psi(v): .
\end{equation} where $\frac{1}{u-v}$ is expanded in the region $|u|>|v|$ (see, for example, \cite{DJM}).
	
	\subsection{Boson-fermionic correspondence}
	
	We consider
	\begin{align}\label{eqn:J_n}
		J_n:=-\sum_{k\in\mathbb{Z}+\frac{1}{2}} :\psi^*_{k}\psi_{k-n}:,
		n\in\mathbb{Z},
	\end{align}
and introduce the following vertex operator
	\begin{align*}
		\Gamma_-(\mathbf{t})=\exp\Big(\sum_{n\in\mathbb{Z}_{+}}
		t_n J_{-n}\Big),
	\end{align*}
	where $\mathbf{t}=(t_1,t_2,...)$ are called the time variables of KP flow.
	
	The boson-fermionic correspondence tells that the fermionic Fock space and bosonic Fock space are two equivalent representations of the Heisenberg algebra (see \cite{DJM}).
	Thus,
	it gives an isomorphism between the ring of formal power series with variables $\mathbf{t}=(t_1,t_2,...)$ and the fermionic Fock space as vector spaces.
	For convenience, we restrict us to the charge 0 component.
	Other components are isomorphic to this part.
	\begin{prop}[boson-fermionic correspondence]
		The following map
		\begin{align*}
			\Phi:\quad \mathcal{F}^{(0)} \quad&\rightarrow \quad\mathbb{C}[\![\mathbf{t}]\!]\\
			|V\rangle \quad&\mapsto \quad\langle0 |\Gamma_-(\mathbf{t}) |V\rangle
		\end{align*}
		is an isomorphism between vector spaces.
	\end{prop}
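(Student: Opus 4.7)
The plan is to exhibit matching bases on the two sides and show that $\Phi$ restricts to a bijection between them. On the fermionic side, admissible sets of charge zero are in bijection with partitions $\lambda=(\lambda_1\ge\lambda_2\ge\cdots)$ via $s_i=\lambda_i-i+\tfrac12$, giving a $\mathbb{C}$-basis $\{|\lambda\rangle\}_\lambda$ of $\mathcal{F}^{(0)}$. On the bosonic side, the Schur polynomials $\{s_\lambda(\mathbf{t})\}_\lambda$, viewed under the identification $p_n=n t_n$ of power sums with the time variables, form a graded basis of $\mathbb{C}[\![\mathbf{t}]\!]$. Thus it suffices to establish the matching identity
\begin{equation*}
\Phi(|\lambda\rangle)=\langle 0|\Gamma_-(\mathbf{t})|\lambda\rangle = s_\lambda(\mathbf{t}).
\end{equation*}

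The first step is to verify that the operators $J_n$ of \eqref{eqn:J_n} generate a Heisenberg algebra. A direct calculation using the anti-commutation relations \eqref{eqn:fercomm} and the normal-ordering convention \eqref{def_normalorder} yields
\begin{equation*}
[J_n,\psi_k]=\psi_{k+n},\qquad [J_n,\psi^*_k]=-\psi^*_{k-n},\qquad [J_m,J_n]=m\,\delta_{m+n,0}\cdot\mathrm{id},
\end{equation*}
and shows that $J_n|0\rangle=0$ for $n>0$, since each summand in \eqref{eqn:J_n} becomes a normal-ordered annihilator on the vacuum. Exponentiating the commutators with the fermionic fields of \eqref{defferfield} produces the translation formulas
\begin{equation*}
\Gamma_-(\mathbf{t})\,\psi(u)\,\Gamma_-(\mathbf{t})^{-1}=e^{\xi(\mathbf{t},u)}\psi(u),\qquad \Gamma_-(\mathbf{t})\,\psi^*(u)\,\Gamma_-(\mathbf{t})^{-1}=e^{-\xi(\mathbf{t},u)}\psi^*(u),
\end{equation*}
where $\xi(\mathbf{t},u):=\sum_{n\ge 1}t_n u^n$.

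The main step is then to write $|\lambda\rangle$ as an ordered product of creation operators $\psi_{\lambda_i-i+1/2}$ and $\psi^*_{-i+1/2}$ acting on $|0\rangle$, commute $\Gamma_-(\mathbf{t})$ through this product via the translation formulas, and evaluate the resulting vacuum expectation through the Wick theorem. The outcome is a determinant whose entries are the complete homogeneous symmetric polynomials $h_k(\mathbf{t})$ defined by $\sum_k h_k(\mathbf{t})u^k=e^{\xi(\mathbf{t},u)}$; this is precisely the Jacobi--Trudi expression for $s_\lambda(\mathbf{t})$. The principal obstacle is this Wick/determinantal identification, where one has to track carefully the signs arising from the wedge product, from the normal ordering, and from reordering the creators. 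Once it is in place, bijectivity of $\Phi$ follows at once from the Schur basis property, completing the proof.
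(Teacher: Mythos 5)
The paper itself offers no proof of this proposition---it is quoted from \cite{DJM}---so the comparison is with the standard references rather than with an internal argument. Your route (match the partition basis $\ket{\lambda}$ of $\mathcal{F}^{(0)}$ with the Schur functions by proving $\langle 0|\Gamma_-(\mathbf{t})|\lambda\rangle=s_\lambda(\mathbf{t})$ via the translation formulas, Wick's theorem and Jacobi--Trudi) is a correct and standard alternative to the representation-theoretic proof in \cite{DJM} (irreducibility of the Heisenberg action plus a character count); it buys more, namely the explicit images of basis vectors, which is exactly what the paper uses later when it rewrites $\Psi(z)$ and $\Psi(u,v)$ as shifted tau-functions. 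Two points need attention. First, conventions: with the paper's definition \eqref{eqn:J_n} one computes $[J_n,\psi_m]=\psi_{m-n}$ and $[J_n,\psi^*_m]=-\psi^*_{m+n}$, not the relations you wrote; taken literally this gives $\langle 0|J_{-n}=0$ for $n>0$, so $\langle 0|\Gamma_-(\mathbf{t})=\langle 0|$ and the stated map would be trivial. In other words the paper's pairing of \eqref{eqn:J_n} with $\Gamma_-$ contains an index/sign slip, and your commutators silently adopt the corrected normalization, which is also what you need when you push $\Gamma_-$ through the creators onto $|0\rangle$ (the modes in $\Gamma_-$ must annihilate the right vacuum) and in the exponent of the translation formula ($\sum_n t_nu^{n}$ versus $\sum_n t_nu^{-n}$); state this choice explicitly. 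Second, since the source allows infinite sums $\sum_S c_S|S\rangle$ and the target is $\mathbb{C}[\![\mathbf{t}]\!]$, bijectivity should be argued degree by degree: with $\deg t_n=n$ the energy-$d$ part of $\mathcal{F}^{(0)}$ maps onto the finite-dimensional span of $\{s_\lambda\}_{|\lambda|=d}$, and the isomorphism of completions follows; your ``graded basis'' remark is the right idea and just needs to be spelled out at that level.
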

	
	\subsection{KP hierarchy}
	We follow Sato's theory \cite{S} to review the notion of tau-function of the KP hierarchy.
	For convenience, we restrict us to the charge 0 component $\mathcal{F}^{(0)}$.
	Since the infinite wedge formalism of the fermionic Fock space $\mathcal{F}^{(0)}=\big(\Lambda^{\frac{\infty}{2}}V\big)^{(0)}$,
	there is a natural Pl{\" u}cker embedding from the infinite dimensional Grassmannian $Gr$ to the projective space $\mathbb{P}\mathcal{F}^{(0)}$ since the construction using the infinite wedge (see \cite{DJM}).
	A general element in $Gr$ can be represented as
	\[f_0(z)\wedge f_1(z) \wedge\dots,\]
	where each $f_i(z)\in V=z^{1/2}\cdot\mathbb{C}[\![z^{-1}]\!][z]$.
	
	An element $|V\rangle\in\mathcal{F}^{(0)}\setminus\{0\}$ is a fermionic tau-function of the KP hierarchy means that its image of projection to $\mathbb{P}\mathcal{F}^{(0)}$ further lies in the Grassmannian $Gr$.
	After boson-fermionic correspondence,
	the conditions of being in the Grassmannian are equivalent to the Hirota bilinear relations of the KP hierarchy.
	For convenience, we use the same notation $|V\rangle$ to represent its projection to Sato's Grassmannian since a non-zero constant will not influence the KP integrability.
	When the projection of $|V\rangle$ is in the big cell of Sato's Grassmannian,
	writing as $|V\rangle \in Gr_{(0)}$,
	it can be represented as
	\begin{align*}
		|V\rangle=c\cdot\varphi_0\wedge\varphi_1\wedge\dots,
	\end{align*}
	where $c$ is a non-zero constant, and the so-called admissible basis $\varphi_k$ has the following form
	\begin{equation}\label{adba}
		\varphi_k = z^{k+\frac{1}{2}}+\sum_{i=1-k}^{\infty}a_{k,i}z^{-i+\frac{1}{2}},
		\quad a_{k,i}\in\mathbb{C},
		\quad k=0,1,2,\cdots.
	\end{equation}
	It is worth mentioning that, for any given $|V\rangle$, the admissible basis is not unique.
	By Gauss elimination, a kind of canonical choice can be applied.
	That is to say, for each $|V\rangle\in Gr_{(0)}$,
	there is a unique admissible basis of $|V\rangle$, which has the following form
	\begin{equation}\label{canobasis}
		\tilde\varphi_k = z^{k+\frac{1}{2}}+\sum_{i=0}^{\infty}b_{k,i}z^{-i-\frac{1}{2}}, \quad k=0,1,2,\cdots.
	\end{equation}
	They are called the canonical basis of $|V\rangle$ and can be obtained by Gauss elimination from any given admissible basis.
	Those unique coefficients $b_{k,i}$  appear in \cite{ADKMV} as the coefficients in the Bogoliubov transform of the fermionic vacuum.
	The big cell of the Sato's Grassmannian is an affine space,
	and then these coefficients $b_{k,i}$ are indeed the coordinates of the point corresponding to $|V\rangle$ in this affine space
	(see, for example, \cite{DJM, Z13, BY, Z15}).
	They can be obtained from the following formula
	\begin{equation}\label{eqn:bki}
		b_{k,i} = \frac{\braket{0| \psi^*_{k+\half} \psi_{-i-\half} |V}}{\braket{0|V}}.
	\end{equation}
	Conversely, the vector $|V\rangle$ can be recovered via its canonical basis (see \cite{DJM,ADKMV,Z15})
	\begin{align}\label{eqn:V A}
		|V\rangle=c\cdot e^A |0\rangle,
	\end{align}
	where
	\begin{equation}\label{eqn:A affine_coord}
		A= \sum_{k,i\geq 0}  b_{k,i}   \psi^*_{-i-\half} \psi_{k+\half}.
	\end{equation}
	Thus, one can notice that, up to the nonzero constant $c$, all information of the vector $|V\rangle$ is encoded in its canonical basis.
	As a consequence, studying the generating series of the canonical basis is an effective method to study the tau-function of the KP hierarchy (see \cite{O02,Z15}).
	Indeed, the main object in this paper, the fermionic two-point function, is a natural generating series of the canonical basis.

	\subsection{Neutral fermions}
	The neutral fermions are operators $\{\phi_i\}_{i\in\mathbb{Z}}$ satisfying
	\begin{align}\label{eqn:B comm. relations}
		[\phi_i,\phi_j]_+=(-1)^i\delta_{i+j,0}\cdot\mathrm{id}.
	\end{align}
	The B-type fermionic Fock space is generated by the vacuum state $|0\rangle$, which satisfies
	\begin{align}\label{eqn:B right|0>}
		\phi_i |0\rangle=0,
		\ \ \ \ \forall i<0.
	\end{align}
	That is to say,
	\begin{align*}
		\mathcal{F}_B
		=\Big\{\sum c_{k_1,...,k_l}\phi_{k_1}\cdots\phi_{k_l}
		|0\rangle|l\geq0, k_1>\cdots>k_l\geq0, c_{k_1,...,k_l}\in\mathbb{C}\Big\}.
	\end{align*}
	When asking the numbers of neutral fermions $l$ to be even, we obtain the subspace $\mathcal{F}_B^{(0)}$.
	Similarly, one can construct the dual B-type fermionic Fock space $\mathcal{F}_B^*$.
	It is a $\mathbb{C}$-vector space generated by
	\begin{align*}
		\langle0|\phi_{n_l}\cdots\phi_{n_2}\phi_{n_1},\ \ n\geq0, n_1<n_2\dots<n_l\leq0,
	\end{align*}
	where $\langle0|$ is the dual vacuum state satisfying
	\begin{align}\label{eqn:B left<0|}
		\langle0|\phi_{i}=0,\ \ \forall i>0.
	\end{align}
	Then there is a pairing $\mathcal{F}_B^*\times\mathcal{F}_B\rightarrow\mathbb{C}$ determined by equations \eqref{eqn:B comm. relations}, \eqref{eqn:B right|0>}, \eqref{eqn:B left<0|}, and initial values $\langle0|0\rangle=1, \langle0|\phi_0|0\rangle=0$.
	This paring is called the vacuum expectation value.
	The following special vacuum expectation values are easily obtained
	\begin{align*}
		\langle0|\phi_i\phi_j|0\rangle=
		\begin{cases}
			\frac{1}{2},\ \ \ \ &i=j=0,\\
			1,\ \ \ \ &-i=j>0,\\
			0,\ \ \ \ &others.
		\end{cases}
	\end{align*}
	In general, there is also the Wick theorem to calculate general vacuum expectation values.
	
	The normal-ordering product of neutral fermions is defined by
	\begin{align}\label{eqn:B ::}
		:\phi_i\phi_j:=
		\phi_i\phi_j-\langle0|\phi_i\phi_j|0\rangle.
	\end{align}
	It is also convenient to introduce the following generating series of neutral fermions,
	\begin{align*}
		\phi(z):=\sum_{i\in\mathbb{Z}} \phi_i z^i.
	\end{align*}
   Applying \eqref{eqn:B ::} we have the following OPE:
   \begin{align}
   \phi(u)\phi(v)=:\phi(u)\phi(v):+\frac{u-v}{2(u+v)}
   \end{align}
   where $\frac{u-v}{2(u+v)}$ is expanded in the region $|u|>|v|$. See \cite{BH,WY} for the details.

	\subsection{B-type boson-fermionic correspondence and BKP hierarchy}
	The B-type bosons $H^B_n$ are defined by
	\begin{align}\label{eqn:H_n}
		H^B_n:=\frac{1}{2}\sum_{i\in\mathbb{Z}}(-1)^{i+1}:\phi_i\phi_{-i-n}:
	\end{align}
	for $n\in\mathbb{Z}_{odd}$.
	Define
	\begin{align*}
		H^B_+(\hat{\mathbf{t}}) = \sum_{n\in\mathbb{Z}_{+,odd}} t_n H^B_n,
	\end{align*}
	where $\hat{\mathbf{t}}=(t_1,t_3,...)$ is a family of formal variables.
	Then when restricting to the subspace $\mathcal{F}_B^{(0)}$,
	the famous B-type boson-fermion correspondence can be stated as follows.
	\begin{prop}
		There is an isomorphism between vector spaces
		\begin{align*}
			\sigma_B: \ \ \mathcal{F}_{B}^{(0)} \ \ &\rightarrow \ \ \mathbb{C}[\![\hat{\mathbf{t}}]\!]\\
			|V\rangle \ \ &\mapsto \ \ \langle0| e^{H^B_+(\hat{\mathbf{t}})}|V\rangle.
		\end{align*}
	\end{prop}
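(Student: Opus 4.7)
The plan is to prove this by the standard boson-fermion argument: equip both sides with a Heisenberg algebra action, show $\sigma_B$ intertwines them, and match characters. First, I would verify that the operators $H^B_n$ of \eqref{eqn:H_n} satisfy
\begin{equation*}
    [H^B_m, H^B_n] = \tfrac{m}{2}\delta_{m+n,0} \cdot \mathrm{id}, \qquad m, n \in \mathbb{Z}_{odd},
\end{equation*}
together with $H^B_n|0\rangle = 0$ and $\langle 0|H^B_{-n} = 0$ for every $n > 0$ odd. The commutator is obtained by direct expansion from the anticommutation relations \eqref{eqn:B comm. relations} and the normal-ordering convention \eqref{eqn:B ::}; the OPE $\phi(u)\phi(v) = {:}\phi(u)\phi(v){:} + \tfrac{u-v}{2(u+v)}$ displayed just above the Proposition streamlines the contraction bookkeeping that produces the central term. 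The vacuum annihilations follow because, after normal-ordering, each summand in $H^B_{\pm n}$ contains a mode annihilating the appropriate vacuum via \eqref{eqn:B right|0>} or \eqref{eqn:B left<0|}.

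Next, I would derive the intertwining relations for $\sigma_B$. Setting $A = H^B_+(\hat{\mathbf{t}})$ and using the Heisenberg relation to get $[A, H^B_{-n}] = \tfrac{n t_n}{2}$ (a scalar) and $[A, H^B_n] = 0$ for $n > 0$ odd, a one-line commutator computation gives
\begin{equation*}
    e^A H^B_{-n} = \bigl(H^B_{-n} + \tfrac{n t_n}{2}\bigr) e^A, \qquad \tfrac{\partial}{\partial t_n} e^A = H^B_n e^A.
\end{equation*}
Combined with the vacuum annihilations, these yield
\begin{equation*}
    \sigma_B\bigl(H^B_n|V\rangle\bigr) = \tfrac{\partial}{\partial t_n}\sigma_B(|V\rangle), \qquad
    \sigma_B\bigl(H^B_{-n}|V\rangle\bigr) = \tfrac{n t_n}{2}\sigma_B(|V\rangle),
\end{equation*}
and $\sigma_B(|0\rangle) = 1$ directly.

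Then I would construct a one-sided inverse. Since the operators $H^B_{-n}$ ($n > 0$ odd) pairwise commute, the rule $t_n \mapsto \tfrac{2}{n} H^B_{-n}|0\rangle$ extends to a well-defined linear map $\tilde{\sigma}_B : \mathbb{C}[\hat{\mathbf{t}}] \to \mathcal{F}_B^{(0)}$ sending $\prod_j t_{n_j} \mapsto \prod_j (\tfrac{2}{n_j}) \cdot \prod_j H^B_{-n_j}|0\rangle$. Iterating the second intertwining relation above yields $\sigma_B \circ \tilde{\sigma}_B = \mathrm{id}$ on polynomials, so $\sigma_B$ is surjective.

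Finally, I would grade both sides by placing $H^B_{-n}$ and $t_n$ in degree $n$ and compare the resulting characters. The bosonic side $\mathbb{C}[\![\hat{\mathbf{t}}]\!]$ has character $\prod_{n \geq 1,\, \text{odd}}(1 - q^n)^{-1}$. The fermionic side, with basis $\phi_{k_1}\cdots\phi_{k_{2r}}|0\rangle$ ($k_1 > \cdots > k_{2r} \geq 0$) at energy $\sum k_i$, has character $\prod_{k \geq 1}(1 + q^k)$, which by Euler's identity coincides with the bosonic one. Equality of the (finite) graded dimensions, together with the surjectivity produced by $\sigma_B \circ \tilde{\sigma}_B = \mathrm{id}$, forces $\sigma_B$ to be bijective on every graded piece, hence an isomorphism after the natural completion on the fermionic side is taken to match $\mathbb{C}[\![\hat{\mathbf{t}}]\!]$. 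The main obstacle is the first step: precisely extracting the central term $\tfrac{m}{2}\delta_{m+n,0}$ from the doubly-infinite normal-ordered sums; once this is in place, the remainder follows the standard pattern of the boson-fermion correspondence.
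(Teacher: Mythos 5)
The paper does not actually prove this proposition: it is quoted as the ``famous B-type boson--fermion correspondence'' and taken from the literature (e.g.\ the references around \cite{DJM,BH,L}), so there is no in-paper argument to compare against. Your blind proof is essentially the standard argument and is correct in outline: the Heisenberg relation $[H^B_m,H^B_n]=\tfrac m2\delta_{m+n,0}$ and the vacuum annihilations do follow from \eqref{eqn:B comm. relations} and \eqref{eqn:B ::}; the intertwining identities $\sigma_B(H^B_n|V\rangle)=\partial_{t_n}\sigma_B(|V\rangle)$ and $\sigma_B(H^B_{-n}|V\rangle)=\tfrac{nt_n}{2}\sigma_B(|V\rangle)$ are exactly as you derive them; the right inverse on polynomials gives surjectivity in each degree; and the character identity $\prod_{k\ge1}(1+q^k)=\prod_{n\ \mathrm{odd}}(1-q^n)^{-1}$ (Euler) matches the graded dimensions, the fermionic count working because a possible $\phi_0$ converts strict partitions of odd length into even-length states of $\mathcal F_B^{(0)}$. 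Two small points are worth making explicit. First, your graded-piece argument needs the (easy) observation that $\sigma_B$ preserves the energy grading, i.e.\ $\langle0|H^B_{n_1}\cdots H^B_{n_k}|W\rangle$ vanishes unless $\sum n_i$ equals the energy of $|W\rangle$, so that the image of an energy-$d$ state is homogeneous of degree $d$. Second, the completion caveat you flag at the end is genuinely needed: with only finite linear combinations in $\mathcal F_B^{(0)}$ the image is the polynomial ring $\mathbb C[\hat{\mathbf t}]$, and the statement with target $\mathbb C[\![\hat{\mathbf t}]\!]$ presupposes that infinite sums are allowed on the fermionic side, as the paper's definition of the Fock space implicitly does. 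With those two remarks added, your argument is a complete and correct proof of the stated proposition.
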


	About the BKP hierarchy,
	there is also a kind of realization of the B-type fermionic Fock space via the infinite wedge and the isotropic Grassmannian in literatures (see, for example, \cite{BH,JWY}).
	For simplicity, we restrict ourselves to the bosonic and fermionic spaces and ignore B-type infinite wedge and isotropic Grassmannian.
	In this paper, we focus on the tau-functions of the BKP hierarchy in the big cell.
	They are tau-functions satisfying $\tau({\bf 0})\neq0$.
	Writing the corresponding vector $|V\rangle$ in the B-type fermionic Fock space, the above condition is equivalent to $\langle0|V\rangle\neq0$.
	
	A useful representation of $|V\rangle$ in the big cell is the following (see \cite{BH,WY})
	\begin{align}\label{eqn:a V BKP}
		|V\rangle=c\cdot e^A|0\rangle,
	\end{align}
	where $c$ is a non-zero constant, and
	\[A=\sum_{n,m\geq0}a_{n,m}\phi_m\phi_n,\]
	such that $a_{n,m}=-a_{m,n}$.
	This kind of expression of a given $|V\rangle$ is also unique.
	Those canonical numbers $\{a_{n,m}\}$ are called the B-type affine coordinates of the corresponding vector or, say, the tau-function $|V\rangle$.
	We notice that the non-zero constant $c$ in equation \eqref{eqn:a V BKP} will not cause more difficulty in results in this paper.

	\section{$W_{1+\infty}$ and $W^B_{1+\infty}$ algebras}\label{walgebra}
	In this section, we review the notion of $w_{1+\infty}, w^B_{1+\infty}$ algebras and their realizations via fermions.
	We recommend the references \cite{FKN, L} for interested readers.
	
	\subsection{$w_{1+\infty}$ and $W_{1+\infty}$ algebras}
	Denote by $w_{1+\infty}$ the algebra of diffeomorphism on the circle,
	\begin{align}
		w_{1+\infty}:=\text{span}_{\mathbb{C}}\{z^i\partial_z^j\ |i\in\mathbb{Z}, j\in\mathbb{Z}_{\geq0}\}.
	\end{align}
	The algebra $w_{1+\infty}$ has a natural degree by assigning $\deg z=1$ and $\deg \partial_z=-1$.
	For each operator $a$, we define its adjoint $a^*$ in terms of
	\begin{align}\label{eqn:a^*}
		(z^k\partial_z^l)^* := (-\partial_z)^l z^k,\ \ \
		k\in\mathbb{Z}, l\in\mathbb{Z}_{\geq0}.
	\end{align}
	There is a realization of a central extension of $w_{1+\infty}$ algebra via the fermions, which is the $W_{1+\infty}$ algebra.
	This realization has many different choices.
	In this paper, we fix it as
	\begin{equation}\label{eww}
		\begin{split}
			\widehat{} \ : \  w_{1+\infty}&\longrightarrow W_{1+\infty}\\
			a&\longrightarrow \hat a : = \Res_z(:\psi^*(z)a\cdot\psi(z):).
		\end{split}
	\end{equation}
	There are natural Lie algebra structures on $w_{1+\infty}$ and $W_{1+\infty}$ respectively.
	Together with the fermionic Fock space,
	this gives a projective representation of $w_{1+\infty}$.
	
	\begin{lem}\label{thm_wpsicom}
		The elements in $W_{1+\infty}$ algebra have the following commutation relations with fermionic fields:
		\begin{align}
			[\widehat{a},\psi^*(z)] = a^*\cdot\psi^*(z),\quad [\widehat{a},\psi(z)] = -a\cdot\psi(z).
		\end{align}
	\end{lem}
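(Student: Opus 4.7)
The plan is to push the commutator down to the level of formal fields and then extract the residue in $w$. By $\mathbb{C}$-linearity in $a$, it suffices to establish both identities for the monomial generators $a = z^k\partial_z^l$ with $k\in\mathbb{Z}$ and $l\in\mathbb{Z}_{\geq 0}$, for which $a^* = (-\partial_z)^l\,z^k$ by \eqref{eqn:a^*}. Moreover, the normal ordering in $\widehat{a}=\Res_w\,{:}\psi^*(w)\,a_w\psi(w){:}$ only subtracts a scalar function of $w$, which commutes with everything; hence I may compute with the unordered expression $\Res_w\,\psi^*(w)\,a_w\psi(w)$ in place of $\widehat{a}$.

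The central input is the formal-delta identity
$$
\{\psi(z),\psi^*(w)\}\;=\;\delta(z,w)\;:=\;\sum_{n\in\mathbb{Z}}z^n w^{-n-1},
$$
whose defining property is $\Res_w\,\delta(z,w)\,f(w)=f(z)$ for every formal Laurent series $f(w)$. Using this together with the vanishing of the remaining pairwise anticommutators among $\psi(z),\psi(w),\psi^*(z),\psi^*(w)$, a short sequence of swaps (each one moving an adjacent fermion past another and collecting at most one contraction term) yields
\begin{align*}
\bigl[\psi^*(w)\,a_w\psi(w),\,\psi(z)\bigr]   &=  -\delta(z,w)\,a_w\psi(w),\\
\bigl[\psi^*(w)\,a_w\psi(w),\,\psi^*(z)\bigr] &=  \psi^*(w)\,a_w\delta(z,w).
\end{align*}
Applying $\Res_w$ to the first line immediately gives $-a_z\psi(z) = -a\cdot\psi(z)$, which establishes the second identity of the lemma.

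For the first identity I need to transfer $a_w$ off $\delta(z,w)$ and onto $\psi^*(w)$. Using $\Res_w\,f(w)\,\partial_w g(w)=-\Res_w\,(\partial_w f(w))\,g(w)$ and iterating $l$ times,
$$
\Res_w\,\psi^*(w)\,w^k\partial_w^l\delta(z,w) \;=\; (-1)^l\,\Res_w\,\partial_w^l\bigl(\psi^*(w)\,w^k\bigr)\,\delta(z,w) \;=\; (-\partial_z)^l\bigl(z^k\psi^*(z)\bigr),
$$
which by \eqref{eqn:a^*} is exactly $a^*\cdot\psi^*(z)$. The main obstacle I anticipate is disciplined sign bookkeeping in producing the two commutator formulas above, together with confirming that the output of the integration-by-parts step matches the convention chosen for $a^*$ in \eqref{eqn:a^*}; once that is in hand, the argument is entirely mechanical.
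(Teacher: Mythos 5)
Your proposal is correct, but it takes a different computational route from the paper. The paper proves the lemma by explicit mode expansion: it writes $\widehat a$ for $a=\sum a_{ij}z^i\partial_z^j$ as a double sum over fermionic modes $\psi^*_l\psi_k$ (plus the normal-ordering constant), computes $[\widehat a,\psi(z)]$ directly from the anticommutation relations \eqref{eqn:fercomm}, and re-sums (substituting $l=k-j+i$) to recognize $-a\cdot\psi(z)$; the companion identity for $\psi^*(z)$ is only asserted to follow by a parallel argument. You instead stay at the level of fields, using the formal delta distribution $\{\psi(z),\psi^*(w)\}=\delta(z,w)$, the identity $[AB,C]=A\{B,C\}-\{A,C\}B$ for odd elements, and integration by parts under $\Res_w$. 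This buys two things: the mode bookkeeping (the factors $[k-\tfrac12]_j$ and the index constraint) never appears, and the adjoint identity is proved explicitly rather than deferred, with the integration-by-parts step making transparent why the convention $(z^k\partial_z^l)^*=(-\partial_z)^l z^k$ is exactly the right one. Two small points of hygiene: the unordered expression $\Res_w\,\psi^*(w)a_w\psi(w)$ is not literally a well-defined operator when $a$ has a diagonal part (its normal-ordering constant is a divergent multiple of a single power of $w$), so the precise justification is that ordered and unordered quadratic expressions differ mode by mode by finite multiples of the identity, hence have the same commutators, and the commutator is taken before applying $\Res_w$ --- which is in effect what you do, since your two displayed commutator formulas are stated prior to the residue. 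Also, writing $\{a_w\psi(w),\psi^*(z)\}=a_w\delta(z,w)$ uses the symmetry $\delta(w,z)=\delta(z,w)$ of your delta series; with that checked, the sign bookkeeping goes through exactly as you anticipate, and both identities of the lemma follow.
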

	\begin{proof}
		We only verify the second equality here, since the first equality is obtained from a parallel argument.
		Without losing generality,
		we take $a=\sum_{i\in \mathbb{Z}\atop j> 0}a_{ij}z^i\partial_z^j$. Then from the definition of fermionic field \eqref{defferfield} and normal ordering \eqref{def_normalorder},
		we obtain
		\be
		\begin{aligned}
			\widehat{a}&=\Res_z(:\psi^*(z)a\cdot\psi(z):) \\
			&=\sum_{\mbox{$\tiny\begin{array}{c}
						i\in\mathbb{Z}, j>0 \\
						k, l\in\mathbb{Z}+\half  \\
						\atop k-l+i-j=0
					\end{array}$}}a_{ij}[k-\frac{1}{2}]_{j}
			\cdot\psi_l^*\psi_k
			-\sum_{\mbox{$\tiny\begin{array}{c}
						i\in\mathbb{Z}, j>0 \\
						k, l\in\mathbb{Z}+\half  \\
						\atop k-l+i-j=0\\
						k<0
					\end{array}$}}a_{ij}[k-\frac{1}{2}]_{j}
			\cdot\textbf{1},
		\end{aligned}
		\ee
		where $[k-\frac{1}{2}]_{j}:=\prod_{m=0}^{j-1} (k-\frac{1}{2}-m)$.
		Then, from the anti-commutation relations \eqref{eqn:fercomm} of fermions, we have
		\be
		[\widehat{a},\psi(z)]=-\sum_{\mbox{$\tiny\begin{array}{c}
					i\in\mathbb{Z}, j>0 \\
					k, l \in\mathbb{Z}+\half  \\
					\atop k-l+i-j=0
				\end{array}$}}a_{ij}[k-\frac{1}{2}]_{j} z^{l-1}\psi_k.
		\ee
		By substituting $l=k-j+i$, one can see that it is identical to $-a\cdot \psi(z)$.
	\end{proof}

	The $W_{1+\infty}$ algebra has Heisenberg algebra, and Virasoro algebra as its subalgebras.
	After boson-fermionic correspondence, these subalgebras make important roles in many mathematical physics models.
	\begin{ex}\label{exa:w-W}
		We list some useful subalgebras with generators given in the following
		$${\renewcommand{\arraystretch}{1.5}
			\begin{array}{|c|c|c|c|}\hline
				\hbox{\small Subalgebra}&	\hbox{\small $w_{1+\infty}$}  & \hbox{\small $W_{1+\infty}$ }  \\\hline
				\text{Heisenberg\ algebra} & j_n:= -z^{n}  &  \widehat{j_n}:=J_n \\\hline
				\text{Virasoro\ algebra} &  l_n:= -z^{n}\left(z\frac{\partial}{\partial z} +\frac{n+1}{2}\right)  & \widehat{l_n}:=L_n\\\hline
		\end{array}}$$
		where $J_n$ are given in equation \eqref{eqn:J_n}, and
		\begin{align}
			L_n=-\sum_{l,k\in\mathbb{Z}+\frac{1}{2}\atop l-k=n} (k+\frac{n}{2}):\psi^*_l \psi_k:.
		\end{align}
		For convenience, we use the same notation of the above operators and their counterparts via boson-fermion correspondence.
		Some detailed derivation of this transformation can be seen in \cite{FKN,LY3} based on vertex operators.
		Using notations
		\begin{align*}
			\alpha_n = \frac{\partial}{\partial t_n},\ \ \ \
			\alpha_{-n} = nt_n,\ \
			\text{\ for\ \ }n\in\mathbb{Z}_{>0},
		\end{align*}
		and $\alpha_0=0$, the above operators are given by
		\begin{align}
			J_n=\alpha_n,\ \ \ \ \ \ \
			L_n=\half\sum_{i\in \mathbb{Z}}\bm{:}\alpha_i\alpha_{n-i}\bm{:},
		\end{align}
		where $\bm{:}\cdot\bm{:}$ is the bosonic normal ordering, which moves $\alpha_n, n>0$ to the right.
	\end{ex}

	\subsection{$w^B_{1+\infty}$ and $W^B_{1+\infty}$ algebras}
	For the B-type case,
	let $\iota$ be an involution on $w_{1+\infty}$ in terms of
	\begin{align}
		\iota\big(z^k(z\partial_z)^m\big):=(-z\partial_z)^m (-z)^k.
	\end{align}
	Then, one can obtain the useful subalgebra $w^B_{1+\infty}$ as
	\begin{align}
		w^B_{1+\infty}:=\{l_z\in w_{1+\infty}\ \ |\ \ \iota(l_z)=-l_z\}.
	\end{align}
	
	For any given differential operator $l\in w_{1+\infty}$, one can define the associated B-type fermionic operators as (see \cite{L,A23})
	\begin{align}
		\hat{l}^B:=\frac{1}{2}\text{Res}_{w}\ w^{-1}\cdot :\phi(z)\ l_w\cdot \phi(w):|_{z=-w}.
	\end{align}
	The image of this realization is the $W^B_{1+\infty}$ algebra.
	This gives a realization of $w^B_{1+\infty}$.
	\begin{lem}\label{lem:B hatl-l}
		For any $l\in w_{1+\infty}$,
		\begin{align*}
			[\hat{l}^B,\phi(z)]
			=-\frac{1}{2}\big(l_z-\iota(l_z)\big) \cdot \phi(z).
		\end{align*}
	\end{lem}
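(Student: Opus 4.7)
The plan is to compute $[\hat{l}^B,\phi(z)]$ directly from the definition of $\hat{l}^B$ and the anti-commutation relations $\{\phi_i,\phi_j\}_+ = (-1)^i\delta_{i+j,0}$. By linearity of both sides in $l$, it suffices to treat monomials $l = z^i \partial_z^j$ with $i\in\Z$ and $j\in\Z_{\geq 0}$; the general case then follows by summation.

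First I would apply the standard bilinear-commutator identity $[:AB:, C] = A\{B, C\}_+ - \{A, C\}_+ B$ (valid for any three odd operators) with $A = \phi(-w)$, $B = l_w \phi(w)$, $C = \phi(z)$, reducing the problem to evaluating
$\tfrac{1}{2}\Res_w w^{-1}\bigl[\phi(-w)\{l_w\phi(w),\phi(z)\}_+ - \{\phi(-w),\phi(z)\}_+\, l_w\phi(w)\bigr]$.
From the neutral-fermion OPE $\phi(u)\phi(v) = :\phi(u)\phi(v): + \tfrac{u-v}{2(u+v)}$ and the antisymmetry of normal ordering, the anti-commutator $\{\phi(u),\phi(v)\}_+$ is a c-number formal distribution supported at $u+v = 0$; the two residue rules I would need are $\Res_w f(w)\{\phi(-w),\phi(z)\}_+ = z\,f(z)$ (support at $w=z$) and $\Res_w f(w)\{\phi(w),\phi(z)\}_+ = -z\,f(-z)$ (support at $w=-z$). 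These are verified by a short direct mode computation.

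Next I would dispose of the two pieces separately. The piece involving $\{\phi(-w),\phi(z)\}_+$ yields $z\cdot[w^{-1}l_w\phi(w)]|_{w=z} = l_z\phi(z)$, contributing $-\tfrac{1}{2}l_z\phi(z)$ to the commutator. In the other piece, $l_w$ acts on a distribution supported at $w=-z$; I would transfer it onto the other factor via formal integration by parts, $\Res_w f(w)\, l_w g(w) = \Res_w l^*_w f(w)\cdot g(w)$, with $l^*$ the formal adjoint of equation \eqref{eqn:a^*}. This piece then reduces to $\tfrac{1}{2}(-z)\cdot l^*_w[w^{-1}\phi(-w)]\big|_{w=-z}$.

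The crux is then the identification $-z\cdot l^*_w[w^{-1}\phi(-w)]\big|_{w=-z} = \iota(l_z)\phi(z)$. For $l = z^i\partial_z^j$, expanding $l^*_w[w^{-1}\phi(-w)] = (-\partial_w)^j[w^{i-1}\phi(-w)]$ by Leibniz and specialising at $w=-z$ (keeping careful track of the three independent sign sources: the $(-1)^j$ in $l^*$, the sign inside $\phi(-w)$, and the evaluation $w=-z$), the coefficient of $\phi_k z^{i+k-j}$ becomes $(-1)^i\sum_p\binom{j}{p}[i-1]_p[k]_{j-p}$, which the Vandermonde--Chu identity for falling factorials collapses to $(-1)^i[i+k-1]_j$. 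Independently, writing $z^i\partial_z^j$ as $z^{i-j}$ times the falling factorial of the Euler operator $z\partial_z$ and applying the anti-algebra-homomorphism $\iota$ yields an operator whose action on $\phi(z)$ is precisely $(-1)^i\sum_k[i+k-1]_j z^{i+k-j}\phi_k$. Assembling the two pieces finally gives $[\hat{l}^B,\phi(z)] = \tfrac{1}{2}\iota(l_z)\phi(z) - \tfrac{1}{2}l_z\phi(z) = -\tfrac{1}{2}(l_z - \iota(l_z))\phi(z)$, as required.

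The main obstacle will be this last combinatorial identification: balancing the three independent signs, converting freely between the bases $\{z^i\partial_z^j\}$ and $\{z^k(z\partial_z)^m\}$ on which $\iota$ is most transparent, and recognising the Vandermonde identity that fuses the two computations. Once this dictionary is in place, the whole assertion collapses to a one-line algebraic verification.
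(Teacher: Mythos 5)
Your proposal is correct, and it is essentially the paper's own argument carried out in full detail: the paper simply says to repeat the mode computation of Lemma~\ref{thm_wpsicom} with neutral fermions, and your computation — splitting $[\phi(-w)\,l_w\phi(w),\phi(z)]$ by the bilinear identity, evaluating the two delta-supported anticommutators at $w=z$ and $w=-z$, moving $l_w$ by formal integration by parts, and identifying $-z\,l_w^{*}\big[w^{-1}\phi(-w)\big]\big|_{w=-z}=\iota(l_z)\phi(z)$ — is exactly that verification, and I checked that the residue rules, the signs, and the Chu--Vandermonde collapse to $(-1)^i[i+k-1]_j$ all come out right, giving $\tfrac12\iota(l_z)\phi(z)-\tfrac12 l_z\phi(z)$ as claimed. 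One small remark: you invoke $\iota$ as an anti-homomorphism, which the paper defines only on the basis $z^k(z\partial_z)^m$; since $z^i\partial_z^j=z^{i-j}\prod_{m=0}^{j-1}(z\partial_z-m)$ is already a linear combination of such basis elements, you can apply the definition directly and bypass that (true but unstated) property.
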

	{\bf Proof:}
	The proof of this Lemma is similar to Lemma \ref{thm_wpsicom}.
	One only needs to replace the anti-commutation relations for charged fermions to neutral fermions in equation \eqref{eqn:B comm. relations}.
	$\Box$
	
	\begin{defn} \label{tildel}
For a given differential operator $l\in w_{1+\infty}$, it is convenient to define its anti-symmetrization
\begin{align}
	\tilde{l}:=\frac{1}{2}\big(l-\iota(l)\big).
\end{align}
	\end{defn}
	
	As a corollary, we obtain the following:
	\begin{cor}
		For any $l\in w_{1+\infty}$, denote by $\tilde{l}$ its anti-symmetrization with respect to $\iota$, then
		$$[\hat{l}^B,\phi(z)]=[\hat{\tilde{l}}^B,\phi(z)].$$
	\end{cor}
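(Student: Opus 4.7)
The plan is to reduce the corollary directly to Lemma \ref{lem:B hatl-l} combined with the fact that $\iota$ is an involution on $w_{1+\infty}$. First I would apply Lemma \ref{lem:B hatl-l} to $l$ itself: by the very definition of $\tilde{l}$ in Definition \ref{tildel}, the right-hand side $-\tfrac{1}{2}(l_z - \iota(l_z))\phi(z)$ is exactly $-\tilde{l}_z \, \phi(z)$, so $[\hat{l}^B,\phi(z)] = -\tilde{l}_z\phi(z)$. Next, applying the same lemma to $\tilde{l}$ in place of $l$ gives $[\hat{\tilde{l}}^B, \phi(z)] = -\tfrac{1}{2}(\tilde{l}_z - \iota(\tilde{l}_z))\phi(z)$. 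The corollary therefore reduces to the single identity $\iota(\tilde{l}) = -\tilde{l}$.

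For that identity it suffices to verify that $\iota^2 = \mathrm{id}$ on $w_{1+\infty}$: once this is known, from $\tilde{l} = \tfrac{1}{2}(l-\iota(l))$ one gets $\iota(\tilde{l}) = \tfrac{1}{2}(\iota(l)-\iota^{2}(l)) = \tfrac{1}{2}(\iota(l)-l) = -\tilde{l}$, and then $[\hat{\tilde{l}}^B,\phi(z)] = -\tilde{l}_z\phi(z) = [\hat{l}^B,\phi(z)]$. Checking $\iota^2 = \mathrm{id}$ is a direct computation on the basis $\{z^k(z\partial_z)^m\}$. Using the commutation rule $(z\partial_z)^m z^k = z^k(z\partial_z+k)^m$, one first rewrites $\iota(z^k(z\partial_z)^m) = (-z\partial_z)^m(-z)^k$ in the normal-ordered form $(-1)^{m+k} z^k(z\partial_z+k)^m$; then one expands by the binomial theorem and applies $\iota$ once more, using the telescoping identity $\sum_j \binom{m}{j} k^{m-j}(-z\partial_z-k)^j = (-z\partial_z)^m$. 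The two sign factors $(-1)^{m+k}$ combine to $1$ and the $k$-shifts cancel, giving $\iota^2(z^k(z\partial_z)^m) = z^k(z\partial_z)^m$ as required.

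The main — and really quite modest — obstacle is the bookkeeping when moving all $z$'s past all $\partial_z$'s to put the image of $\iota$ back into normal-ordered form; everything else is formal. No structural input is needed beyond Lemma \ref{lem:B hatl-l} and the definition of $\iota$.
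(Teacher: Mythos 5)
Your proof is correct and follows essentially the same route as the paper: apply Lemma \ref{lem:B hatl-l} to both $l$ and $\tilde{l}$ and observe that anti-symmetrizing $\tilde{l}$ changes nothing, since your identity $\iota(\tilde{l})=-\tilde{l}$ is equivalent to the paper's $\tilde{\tilde{l}}=\tilde{l}$. The only difference is that you verify $\iota^{2}=\mathrm{id}$ by an explicit normal-ordering computation, whereas the paper simply takes this for granted from $\iota$ being an involution.
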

	{\bf Proof:}
	Both of the two sides of the above equation are equal to
	$-\tilde{l}\cdot\phi(z)$
	since $\tilde{\tilde{l}}=\tilde{l}$.
	$\Box$
	
	\begin{ex}\label{exa:B w-W}
		We list some useful operators in $w^B_{1+\infty}$ and their realizations in $W^B_{1+\infty}$.
		$${\renewcommand{\arraystretch}{1.5}
			\begin{array}{|c|c|c|c|}\hline
				\hbox{\small Subalgebra}&	\hbox{\small $w^B_{1+\infty}$}  & \hbox{\small $W^B_{1+\infty}$ }  \\\hline
				\text{Heisenberg} & j_n:= -z^{n}, n \text{\ odd\ }  &  \hat{j}_n^B:=H_n^B \\\hline
				\text{Virasoro} &  l_n:= -z^{n}\left(z\partial_z+\frac{n}{2}\right), n \text{\ even\ }  & \hat{l}_n^B:=L_n^B\\\hline
		\end{array}}$$
		Here $H^B_n$ are given in equation \eqref{eqn:H_n} and
		\begin{align}
			L_n^B=\frac{1}{2}\sum_{i\in\mathbb{Z}}(-1)^{i}i:\phi_{-i-n}\phi_i:, \ n\in\mathbb{Z}_{even}.
		\end{align}
		After boson-fermionic correspondence (see, for example, \cite{L,A23,LY3}),
		they are given by
		\begin{align}
			H_n^B=& \begin{cases}
				\frac{\partial}{\partial t_n}, &\ n\in\mathbb{Z}_{+,odd},\\
				\frac{-n}{2}t_{-n}, &\ n\in\mathbb{Z}_{-,odd},
			\end{cases}\\
			L_n^B=&
			\sum_{i+j=n}\bm{:}J_i^B J_j^B\bm{:}, \ n\in\mathbb{Z}_{even}.
		\end{align}
	\end{ex}

	\section{Lifting operator and fermionic two-point function}\label{mainresult}
	In this section, we introduce the notion of the lifting operator and derive our main results.
	
	\subsection{KP case}\label{sec:main A}
	For a given fermionic tau-function $\V$ of the KP hierarchy in the big cell of Sato's Grassmannian $Gr_{(0)}$,
	we define the fermionic one-point function and  its dual  by
	\begin{align}\label{defnone}
		\Psi(z) =& \frac{\langle 1|  \psi(z) \V} {\langle 0  \V}=  \frac{\tau(\bt)|_{t_k=-z^{-k}/k }}{\tau(\bt)|_{t_k=0}},\\
		\Psi^*(z) =& \frac{ \langle -1|  \psi^*(z) \V } {\langle 0  \V}=  \frac{\tau(\bt)|_{t_k=z^{-k}/k }}{\tau(\bt)|_{t_k=0}},
	\end{align}
	where $\langle1|=\bra{0}\psi^*_{\half}$ and $\langle-1|=\bra{0}\psi_{-\half}$.
	The second equalities in both expressions above are direct consequences of boson-fermion correspondence (see Section 6.3 in \cite{DJM}).
	Thus, $\Psi(z)$ and $\Psi^*(z)$ are also called the wave function and dual wave function of the corresponding tau-function.

	The fermionic one-point function is also closely relevant to the first canonical basis vector of $\ket{V}$.
	If $|V\rangle$ is given by the following formalism
	$$
	\ket{V} = c\cdot \tilde{\varphi}_{0}\wedge\tilde{\varphi}_{1}\wedge\tilde{\varphi}_{2}\wedge \cdots
	$$
	where $\tilde{\varphi}_i$ is the canonical basis of $|V\rangle$.
	Based on the equation \eqref{eqn:bki} for affine coordinates $b_{k,i}$,
	and equation \eqref{canobasis} for canonical basis,
	we have
	\begin{align} \label{vpk}
		\begin{split}
		\tilde\varphi_0(z) =& z^{\frac{1}{2}}+\frac{\braket{0| \psi^*_{k+\half}\sum_{i=0}^{\infty}z^{-i-\frac{1}{2}} \psi_{-i-\half}|V}}{\braket{0|V}}\\
		=& z^{\frac{1}{2}}\cdot \frac{\braket{0|\psi^*_{\half}\psi(z)|V}}{\braket{0|V}}
		=z^{\frac{1}{2}}\cdot \Psi(z).
		\end{split}
	\end{align}
Hence
$$
\Psi(z) = 1+\sum_{i=0}^\infty  b_{0,i} z^{-i-1}.
$$
Similarly, we have	
	$$
\Psi^*(z) = 1-\sum_{i=0}^\infty  b_{i,0} z^{-i-1}.
$$
	
	The canonical fermionic form of $|V\rangle$ in equation \eqref{eqn:V A} is closely related to the following fermionic two-point function
	$$
	\Psi(u,v)  :=  \frac{\langle 0| \psi^*(u) \psi(v)  \V}{\langle 0  \V} = \frac{1}{u-v} \frac{\tau(\bt)|_{t_k=\frac{u^{-k}-v^{-k}}{k} }}{\tau(\bt)|_{t_k=0}} .
	$$
	where the second equality can be obtained from boson-fermion correspondence directly (see e.g. \cite{O02})
	and $\frac{1}{u-v}$ is expanded in the region $|u|>|v|$.
	Also, based on the equation \eqref{eqn:bki} for affine coordinates $b_{k,i}$, one has
	\begin{align}
		\Psi(u,v) = \frac{1}{u-v} + \sum_{i,j\geq 0} b_{i,j} u^{-i-1} v^{-j-1}.
	\end{align}
	
	One can see that $\Psi(z)$ and $\Psi^*(z)$ record the first row and first column of affine coordinates.
	The $\Psi(u,v)$ records all information of affine coordinates, thus it determines the corresponding tau-function $|V\rangle$.
	Some applications of using $\Psi(u,v)$ to study the corresponding tau-function were studied in \cite{O02, ADKMV}. Thus, it will be important to find a simple formula for $\Psi(u,v)$.
	
	Now, we introduce the essential ingredient in this paper, the lifting operator.
	\begin{defn}[Lifting operator]\label{def:lifting}
		For a fermionic tau-function $|V\rangle$ of the KP hierarchy,
		a \textit{lifting operator} $l_z\in w_{1+\infty}$ for it has the following property.
		There exists an admissible basis $\{\varphi_k\}_{k\in\mathbb{Z}_{\geq0}}$ corresponding to $|V\rangle$,
		and a sequence of complex numbers $a_j, j\in\mathbb{Z}_{\geq1}$ such that
		$$
		(z^{\frac{1}{2}} \cdot l_z \cdot z^{-\frac{1}{2}})^{k}
		\cdot \varphi_0(z)
		=a_k\cdot \varphi_k(z)
		$$
		for all $k\in\mathbb{Z}_{\geq1}$.
		Without loss of generality,
		we extra assume the non-degenerate condition for $l_z$,
		that is if the degree one part of $l_z$ is given by $\sum_{k\geq0} c_k \cdot (z\partial_z)^k z$,
		then $c_0=1$.
		The subscription $z$ here is used to indicate the indeterminate involved.
		\footnote{ The conjugation via $z^{\frac{1}{2}}$ here just comes from different definitions (see, for example, equation \eqref{vpk}).}
	\end{defn}
	
	Firstly, a lifting operator is naively a Kac--Schwarz operator \cite{KS}, i.e., for each $k\in\mathbb{Z}_{\geq0}$,
	$$
	(z^{\frac{1}{2}} \cdot l_z \cdot z^{-\frac{1}{2}}) \cdot\varphi_k \in \ket{V}.
	$$
	Thus, it is well-known that there exists a constant $c_l$ such that (see Lemma 3.2 in \cite{FKN})
	\begin{align}\label{eqn:l action = c}
		\hat{l} \cdot |V\rangle = c_l \cdot |V\rangle.
	\end{align}

	Moreover, the lifting operator for a fermionic-tau function should have the following property.
	\begin{lem}\label{lzform}
		For $\V\in Gr_{(0)}$, any lifting operator of it has the following form
		\begin{equation*}
			l_z = \mathrm{(degree\ one\ contribution)}
			+\mathrm{(non\text{-}positive\ degree\ contribution)},
		\end{equation*}
		where the non-positive degree contribution is an operator linearly generated by
		$z^k\partial_z^l$ with $k\leq l$,
		and the degree one contribution is of the following form: $z+\sum_{k\geq1} c_k \cdot (z\partial_z)^k z$
		with some complex numbers $c_k$ for $k\in\mathbb{Z}_{\geq1}$.
	\end{lem}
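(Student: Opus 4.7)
The plan is to decompose $l_z$ by the grading $\deg z = 1$, $\deg \partial_z = -1$ into pure-degree pieces $l_z = \sum_d l_z^{(d)}$, and show that the lifting property forces the top degree $d_0$ of $l_z$ to equal $1$. Once this is established, the remaining non-positive degree part is by definition spanned by $z^k \partial_z^l$ with $k \leq l$, and the degree one part (lying in the subspace spanned by $\{(z\partial_z)^k z\}_{k \geq 0}$) has $c_0 = 1$ after applying the non-degeneracy condition assumed in Definition \ref{def:lifting}.

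For the core calculation, I would work with the conjugate $L_z := z^{1/2} l_z z^{-1/2}$, whose pure-degree components $L_z^{(d)}$ correspond bijectively to those of $l_z$. Using the identity $(\partial_z - \tfrac{1}{2z})^j z^\alpha = \prod_{s=0}^{j-1}(\alpha - \tfrac{1}{2} - s)\, z^{\alpha - j}$, applied to the leading monomial $z^{k + 1/2}$ of $\varphi_k$, one obtains
\begin{align*}
L_z^{(d)} \cdot z^{k + 1/2} = z^{k + 1/2 + d} \sum_{j = 0}^{k} a_{d+j,\,j}\, \frac{k!}{(k-j)!},
\end{align*}
where $a_{ij}$ denotes the coefficient of $z^i \partial_z^j$ in $l_z$; the sum truncates automatically at $j = k$ because a factor of zero appears for $j > k$. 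Any subleading monomial $z^\alpha$ of $\varphi_k$ with $\alpha \leq k - 1/2$ produces at most $z^{\alpha + d} \leq z^{k - 1/2 + d}$.

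Now the lifting relation, together with the (implicit) non-triviality of the sequence $\{a_j\}$, gives $L_z \varphi_k \propto \varphi_{k+1}$, whose top power is $z^{k + 3/2}$. Suppose for contradiction that $d_0 \geq 2$. Then $L_z^{(d_0)}$ contributes a term at $z^{k + 1/2 + d_0}$, strictly above $z^{k + 3/2}$, while no piece $L_z^{(d)}$ with $d < d_0$ can reach this top power by the degree accounting just noted. Hence the coefficient of $z^{k + 1/2 + d_0}$ in $L_z \varphi_k$ must vanish for every $k \geq 0$:
\begin{align*}
\sum_{j = 0}^{k} a_{d_0 + j,\,j}\, \frac{k!}{(k-j)!} = 0.
\end{align*}
This is a lower-triangular linear system in the unknowns $\{a_{d_0 + j,\,j}\}_{j \geq 0}$ with nonzero diagonal $k!$; induction on $k$ forces all of these to vanish, so $l_z^{(d_0)} = 0$, contradicting the choice of $d_0$. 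Hence $d_0 \leq 1$, while $d_0 \geq 1$ is immediate from the $k = 0$ case: otherwise $L_z \varphi_0$ has maximum power $z^{1/2 + d_0} \leq z^{1/2}$, which cannot supply the $z^{3/2}$ required by $\varphi_1$.

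The main obstacle is the degree-accounting step: verifying that among the infinitely many terms of $L_z$ acting on the infinitely many monomials of $\varphi_k$, only $L_z^{(d_0)}$ acting on $z^{k + 1/2}$ can reach the topmost power $z^{k + 1/2 + d_0}$, with no possibility of cancellation from elsewhere. Once this is in place, the triangular system has an obviously invertible diagonal and the induction closes the proof immediately.
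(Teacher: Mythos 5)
Your proposal is correct and takes essentially the same route as the paper: the paper's two-line proof simply asserts that the definition of the lifting operator forces every term of $l_z$ to have degree at most one, and your leading-coefficient analysis (the lower-triangular system in the top-degree coefficients $a_{d_0+j,\,j}$ extracted from the coefficient of $z^{k+1/2+d_0}$ in $L_z\varphi_k$) is a rigorous justification of exactly that assertion, with the degree-one form and $c_0=1$ then following from the non-degeneracy condition in Definition \ref{def:lifting}, as you say. Note only that, like the paper, you implicitly assume the constants $a_k$ are all nonzero so that $L_z\varphi_k\propto\varphi_{k+1}$ holds for every $k$; this is the intended non-degenerate reading of the definition and you flag it appropriately.
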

	\begin{proof}
		By the definition of the lifting operator,
		the degree of each term in $l_z$ should be less than or equal to one.
		Thus, the positive degree contribution $l_z^+$ of $l_z$ must be of degree one and can be formally written as
		\begin{align}\label{eqn:l_z as zdz}
			l_z^+ =z+\sum_{k\geq1} c_k \cdot (z\partial_z)^k z.
		\end{align}
	\end{proof}

\begin{lem}\label{lem:<0|l}
	For a lifting operator $l_z$,
	we have
	\begin{align}\label{eqn:<0|l}
		\bra{0}\hat{l} = \bra{0}\psi^*_{\half}\psi_{-\half}.
	\end{align}
\end{lem}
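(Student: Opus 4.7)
The plan is to combine the fermionic realization formula derived in the proof of Lemma~\ref{thm_wpsicom} with the structural description of $l_z$ from Lemma~\ref{lzform}, and to isolate the single normal-ordered bilinear that survives after pairing with $\bra{0}$. First, I would split $l_z = l_z^+ + l_z^{\leq 0}$ according to Lemma~\ref{lzform}, where $l_z^+ = \sum_{k\geq 0} c_k (z\partial_z)^k z$ with $c_0 = 1$ is the degree-one contribution and $l_z^{\leq 0}$ is a linear combination of monomials $z^i\partial_z^j$ with $i \leq j$. By the expansion in the proof of Lemma~\ref{thm_wpsicom}, each monomial $a_{ij}z^i\partial_z^j$ contributes to $\hat{l}$ a sum $\sum_k a_{ij}[k-\half]_j :\psi^*_{k+i-j}\psi_k:$, so $\hat{l}$ is a linear combination of normal-ordered bilinears $:\psi^*_l\psi_k:$ in which the shift $l - k$ equals the underlying monomial degree $i - j$.

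The decisive observation is that $\bra{0}:\psi^*_l\psi_k:$ gives the zero bra unless $l > 0$ and $k < 0$. To see this, test $\bra{0}\psi^*_l\psi_k\ket{T}$ on charge-zero kets $\ket{T}$: the only non-vanishing matrix elements arise either in the diagonal case $l = k > 0$ with $\ket{T} = \ket{0}$ (exactly cancelled by the normal-ordering correction $\langle 0|\psi^*_l\psi_l|0\rangle = 1$), or in the off-diagonal case $l > 0,\ k < 0$ paired with the corresponding one-particle-one-hole ket. Hence only terms with $l - k = i - j > 0$ survive in $\bra{0}\hat{l}$. Since $l_z^{\leq 0}$ has $i - j \leq 0$, it contributes nothing; within $l_z^+$ we have $i - j = 1$, and the further constraints $l \geq \half,\ k \leq -\half$ pin $(l, k) = (\half, -\half)$ as the unique surviving pair.

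Finally, I would compute the coefficient of this term. Writing $P(x) = \sum_{k\geq 0} c_k x^k$, the Euler operator identity $(z\partial_z)^m z^n = n^m z^n$ gives $l_z^+ \cdot z^{k - 1/2} = P(k + \half)\, z^{k + 1/2}$. Substituting into \eqref{eww} and extracting the residue yields $\widehat{l_z^+} = \sum_k P(k+\half) :\psi^*_{k+1}\psi_k:$, so the coefficient at $k = -\half$ equals $P(0) = c_0 = 1$. Since $\langle 0|\psi^*_{1/2}\psi_{-1/2}|0\rangle = 0$, no normal-ordering correction is needed at this term, yielding $\bra{0}\hat{l} = \bra{0}\psi^*_{\half}\psi_{-\half}$. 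The main technical subtlety lies in the case analysis of Step 2, where one must carefully verify that the diagonal contribution $l = k > 0$ is exactly cancelled by the normal-ordering subtraction; everything else is a direct reading off of coefficients.
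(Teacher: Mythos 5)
Your proposal is correct and follows essentially the same route as the paper: decompose $l_z$ via Lemma~\ref{lzform}, observe that pairing $\bra{0}$ with the normal-ordered bilinears kills the non-positive-degree part and all degree-one terms except the one with indices $(\tfrac12,-\tfrac12)$, and use the normalization $c_0=1$ to fix the coefficient. The only cosmetic difference is packaging: the paper identifies the $z$-term with $-J_1$ from Example~\ref{exa:w-W} and kills $(z\partial_z)^k z$, $k\geq 1$, via the vanishing of $a^k$ at $a=0$, whereas you treat all of $l_z^+$ at once through the eigenvalue polynomial $P(k+\tfrac12)$ and evaluate $P(0)=c_0=1$ — the same computation in a uniform form.
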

\begin{proof}
	First,
	from the $W_{1+\infty}$ realization of $z$ listed in Example \ref{exa:w-W},
	we have
	$$
	\bra{0}\hat{z} = -\bra{0}J_1 = -\bra{0}\psi_{-\half}\psi^*_{\half},
	$$
	which is equal to the right hand side of equation \eqref{eqn:<0|l}.
	Thus,
	from Lemma \ref{lzform},
	we only need to show that,
	for $k\in\mathbb{Z}_{\geq1}$,
	\begin{align*}
		\bra{0}\widehat{(z\partial_z)^k z}
		=0,
	\end{align*}
	and for an operator $l'$ with non-positive degree, $\bra{0}\widehat{l'}
	=0$.
	The latter trivially follows from the normal ordering in equation \eqref{eww} when defining the $W_{1+\infty}$ realization.
	For the former,
	\begin{align*}
		\bra{0} \widehat{(z\partial_z)^k z}
		=\sum_{a\in\mathbb{Z}}
		a^k \bra{0}:\psi^*_{a+\half} \psi_{a-\half}:
	\end{align*}
	is always equal to zero since, for $a\in\mathbb{Z}_{>0}$,
	$\psi_{a-\half}$ is a left annihilator,
	for $a\in\mathbb{Z}_{<0}$,
	$\psi^*_{a+\half}$ is a left annihilator,
	and for $a=0$,
	the coefficient $a^k$ vanishes.
\end{proof}

	The importance of introducing the lifting operator lays on the fact that it can be quite an efficient tool for calculating the fermionic two-point function via our main result Theorem \ref{thm:intro A}.
	We restate it more concretely as the following
	\begin{thm}[= Theorem \ref{thm:intro A}]
		\label{thm:main KP}
		Let $l_z$ be a lifting operator for the fermionic tau-function $\V$ of the KP hierarchy, as defined in Definition~\ref{def:lifting}. Denote the fermionic one-point functions $\Psi(v)$ and $\Psi^*(u)$,  as well as the fermionic two-point function  $\Psi(u,v)$ as those defined in Section~\ref{sec:main A}.
		 Then we have
		\begin{equation}\label{keyequation}
			(l_u^*-l_v) \cdot \Psi(u,v)  = \Psi^*(u)\Psi(v).
		\end{equation}
	\end{thm}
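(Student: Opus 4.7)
The plan is to exploit the dual roles of the fermionic operator $\hat{l}$: acting on $|V\rangle$ from the right, it satisfies $\hat{l}|V\rangle = c_l |V\rangle$ by equation \eqref{eqn:l action = c}; acting on $\langle 0|$ from the left, it equals $\langle 0|\psi^*_{\half}\psi_{-\half}$ by Lemma \ref{lem:<0|l}. I will combine these with the commutator identity from Lemma \ref{thm_wpsicom}, evaluate a resulting four-fermion correlator by the Wick/determinantal expansion valid for the Gaussian state $|V\rangle = c e^A|0\rangle$, and match terms to extract the claimed identity.

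First I would use Lemma \ref{thm_wpsicom} to compute
\begin{equation*}
[\hat{l},\psi^*(u)\psi(v)] = (l_u^* - l_v)\bigl(\psi^*(u)\psi(v)\bigr),
\end{equation*}
and take its $\langle 0|\cdots|V\rangle$ expectation. The left-hand side yields $(l_u^* - l_v)\langle 0|V\rangle\,\Psi(u,v)$. For the right-hand side, I would push $\hat{l}$ onto $|V\rangle$ using $\hat{l}|V\rangle = c_l|V\rangle$ and onto $\langle 0|$ using $\langle 0|\hat{l} = \langle 0|\psi^*_{\half}\psi_{-\half}$, arriving at the master equation
\begin{equation*}
\langle 0|\psi^*_{\half}\psi_{-\half}\psi^*(u)\psi(v)|V\rangle = \bigl[(l_u^*-l_v) + c_l\bigr]\,\langle 0|V\rangle\,\Psi(u,v).
\end{equation*}

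Next I would evaluate the same four-fermion correlator by a direct fermionic computation. Anticommuting $\psi_{-\half}$ through $\psi^*(u)\psi(v)$ via $\{\psi_{-\half},\psi^*(u)\} = 1$ and $\{\psi_{-\half},\psi(v)\} = 0$ decomposes it as $\langle 0|\psi^*_{\half}\psi(v)|V\rangle + \langle 0|\psi^*_{\half}\psi^*(u)\psi(v)\psi_{-\half}|V\rangle$. The first summand equals $\langle 0|V\rangle\,\Psi(v)$. For the second, Wick's theorem applies because $|V\rangle = c e^A|0\rangle$ with $A$ a $\psi^*\psi$-bilinear (so the $\langle\psi^*\psi^*\rangle$ and $\langle\psi\psi\rangle$ correlators vanish on $|V\rangle$), and the four-point function reduces to the $2\times 2$ determinant of pairwise two-point functions. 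Its entries are $c_l = \langle 0|\psi^*_{\half}\psi_{-\half}|V\rangle/\langle 0|V\rangle$ (obtained by pairing $\hat{l}|V\rangle = c_l|V\rangle$ with $\langle 0|$ and applying Lemma \ref{lem:<0|l}), $\Psi(v)$, $1 - \Psi^*(u)$ (from anticommuting $\psi_{-\half}$ leftward past $\psi^*(u)$), and $\Psi(u,v)$. Expanding yields the contribution $[c_l\Psi(u,v) - \Psi(v)(1-\Psi^*(u))]\langle 0|V\rangle$.

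Summing the two contributions, the four-fermion correlator equals $[c_l\Psi(u,v) + \Psi^*(u)\Psi(v)]\langle 0|V\rangle$. Equating with the master equation, the $c_l\Psi(u,v)$ terms cancel and leave precisely $(l_u^* - l_v)\Psi(u,v) = \Psi^*(u)\Psi(v)$, as desired. The main obstacle will be cleanly justifying and applying the Wick/determinantal factorization of the four-fermion vacuum expectation on the Gaussian state $|V\rangle$ with the correct fermionic signs; the identification $c_l = \langle 0|\psi^*_{\half}\psi_{-\half}|V\rangle/\langle 0|V\rangle$, which is forced jointly by the right-action of $\hat{l}$ and Lemma \ref{lem:<0|l}, is the algebraic key making the two expressions for the four-fermion correlator consistent and producing the desired bilinear identity.
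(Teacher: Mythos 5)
Your proposal is correct and follows essentially the same route as the paper: the commutator identity from Lemma \ref{thm_wpsicom}, the eigenvalue relation \eqref{eqn:l action = c} together with Lemma \ref{lem:<0|l} to produce the four-fermion correlator $\braket{0|\psi^*_{\half}\psi_{-\half}\psi^*(u)\psi(v)|V}$, Wick's theorem on the Gaussian state $e^A|0\rangle$, and the identification $c_l=b_{0,0}$ to cancel the diagonal term. The only cosmetic difference is that you anticommute $\psi_{-\half}$ past $\psi^*(u)$ before applying Wick, producing the extra $\Psi(v)$ and the entry $1-\Psi^*(u)$, which recombine to the paper's expansion $b_{0,0}\Psi(u,v)+\Psi^*(u)\Psi(v)$.
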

	\begin{proof}
		From Lemma \ref{thm_wpsicom}, we have
		\begin{equation}
			\begin{split}
				l_u^*\cdot\Psi(u,v) &= \frac{\langle 0| l^*_u\cdot\psi^*(u) \psi(v)  \V}{\langle 0  \V}
				= \frac{\langle 0|[\hat{l},\psi^*(u)] \psi(v)  \V}{\langle 0  \V}\\
				&=\frac{\langle 0|\hat{l}\psi^*(u) \psi(v) -\psi^*(u)\hat{l}\psi(v) \V}{\langle 0  \V}.
			\end{split}
		\end{equation}
		On the other hand, still by Lemma \ref{thm_wpsicom},
		\begin{align*}
			l_v\cdot\Psi(u,v) =& \frac{\langle 0|\psi^*(u) l_v\cdot\psi(v)  \V}{\langle 0  \V}\\
			=&  \frac{\langle 0|-\psi^*(u)\ \hat{l}\ \psi(v)
				+ \psi^*(u) \psi(v)\ \hat{l}\ \V}{\langle 0  \V} .
		\end{align*}
		Then,
		combining the above two equations,
		we have
		\begin{align}\label{eqn:l*-l action}
			(l_u^*-l_v)\Psi(u,v)=\frac{\langle 0|\hat{l}\ \psi^*(u)\psi(v)
				- \psi^*(u)\psi(v)\ \hat{l}\  \V}{\langle 0  \V}.
		\end{align}

		Recall that in Lemma \ref{lem:<0|l},
		we have proved
		$$
		\bra{0}\hat{l}= \bra{0}\psi^*_{\half}\psi_{-\half}.
		$$
		Thus, together with equation \eqref{eqn:l action = c} and equation \eqref{eqn:l*-l action}, we obtain
		\begin{align}\label{eqn:4 fermions}
			\begin{split}
			(l_u^*-l_v+c_l)\Psi(u,v) =& \frac{\braket{0|\psi^*_{\half}\psi_{-\half}\psi^*(u)\psi(v)|V}}{\braket{0|V}}\\
			=&\frac{\braket{0|\psi^*_{\half}\psi_{-\half}\psi^*(u)\psi(v)|V}}{\braket{0|V}}.
			\end{split}
		\end{align}
		Since $|V\rangle$ has the canonical formalism $\V=c\cdot e^A|0\rangle$ in equation \eqref{eqn:V A}, and $\langle0|e^{-A}=\langle0|$,
		we can insert $e^{A} e^{-A}$ to each fermion in the right hand side of the above equation \eqref{eqn:4 fermions}.
		By using Baker--Campbell--Hausdorff formula,
		each $e^{-A} \tilde{\psi} e^A$ is a linear summation of fermions.
		Thus, we can apply the Wick theorem.
		As a result, the right hand side of equation \eqref{eqn:4 fermions} is equal to
		\begin{align*}
			&\frac{\braket{0|\psi^*_{\half}\psi_{-\half}|V} \braket{0|\psi^*(u)\psi(v)|V}
				+\braket{0|\psi^*_{\half}\psi(v)|V}\braket{0|\psi_{-\half}\psi^*(u)|V}}
			{\braket{0|V}}\\
			=& b_{0,0}\Psi(u,v)+\Psi^*(u)\Psi(v),
		\end{align*}
		where we have used the equation \eqref{eqn:bki}.
		Therefore, we obtain
		$$
		(l_u^*-l_v+c_l-b_{0,0})\Psi(u,v) = \Psi^*(u)\Psi(v).
		$$
		Moreover,
		from the definition equation \eqref{eqn:l action = c} of the constant $c_l$ and equation \eqref{eqn:<0|l},
		we have
		\begin{align*}
			c_l
			=\frac{\langle0|\cdot \hat{l} \cdot |V\rangle}{\langle0|V\rangle}
			=\frac{\langle0|\psi^*_{\half}\psi_{-\half} |V\rangle}{\langle0|V\rangle}
			=b_{0,0}.
		\end{align*}
		Thus,
		this theorem is proved.
	\end{proof}

	\begin{rmk}
	The above Theorem \ref{thm:main KP} provides a powerful constraint for the fermionic two-point function $\Psi(u,v)$.
	In general,
	it will totally determine $\Psi(u,v)$ and provides a compact formula (see Section \ref{examples} for applications in some examples).

	\end{rmk}

	\subsection{BKP case}\label{sec:main B}
	For a tau-function $|V\rangle$ of the BKP hierarchy in the big cell, we can define the following functions
	\begin{defn}
		The first and second fermionic one-point functions are defined by
		\begin{align*}
			\Psi^B(z):=\frac{\langle0|\phi_0\phi(z)|V\rangle}{\langle0|V\rangle},
			\quad \tilde{\Psi}^B(z):=\frac{\langle0|\phi_{-1}\phi(z)|V\rangle}{\langle0|V\rangle}.
		\end{align*}
		The fermionic two-point function is defined by
		\begin{align*}
			\Psi^B(u,v):=\frac{\langle0|\phi(u)\phi(v)|V\rangle}{\langle0|V\rangle}.
		\end{align*}
	\end{defn}
	
	By fermionic computations (see Section 3 in \cite{WY}),
	those functions are some generating series of affine coordinates $a_{m,n}$ for $|V\rangle$.
	Concretely, we have the following
	\begin{align}\label{eqn:B psi-a}
		\Psi^B(z)=&
		\frac{1}{2}
		+\sum_{n>0}(-1)^{n}a_{0,n}z^{-n},\\
		\tilde{\Psi}^B(z)=&
		-z-a_{1,0}-2\sum_{n>0}(-1)^{n}a_{1,n}z^{-n},
	\end{align}
	and
	\begin{align}
		\begin{split}
		\Psi^B(u,v)=&-2\sum_{n,m>0}(-1)^{m+n+1}a_{n,m}u^{-n}v^{-m}\\
		&+\sum_{n>0}(-1)^na_{n,0}(u^{-n}-v^{-n})
		+\frac{u-v}{2(u+v)},
		\end{split}
	\end{align}
	where $\frac{u-v}{2(u+v)}$ is expanded in the region $|u|>|v|$.
	As a consequence,
	one can notice that, $\Psi^B(z)$ and $\tilde{\Psi}^B(z)$ record the first two rows of affine coordinates.
	$\Psi^B(u,v)$ records all information of affine coordinates.
	Some applications of using $\Psi^B(u,v)$ to study the corresponding tau-function were studied in \cite{WY}.
	
	Motivated by the last subsection and Lemma \ref{lzform}, we formulate the definition of the lifting operator of the B-type.
	\begin{defn}[Lifting operator of B-type]
		\label{def:lifting B}
		For a given fermionic tau-function $|V\rangle$ of the BKP hierarchy in the big cell, $l\in w_{1+\infty}$ is called a lifting operator with respect to $|V\rangle$, if $l$ is a Kac--Schwarz operator of $|V\rangle$ and it is of the following form
		\begin{align*}
			l=\mathrm{(energy\ one\ contribution)}
			+\mathrm{(non-positive\ energy\ contribution)}.
		\end{align*}
		Here, $l$ is a Kac--Schwarz operator means that there exists a constant $c_l$ such that
		\begin{align}\label{eqn:l = c B}
			\hat{l}^B \cdot |V\rangle = c_l \cdot |V\rangle,
		\end{align}
		where $\hat{l}^B$ is the realization of $l$ in $W^B_{1+\infty}$,
		and non-positive energy contributions are operators linearly generated by
		$z^k\partial_z^l$ with $k\leq l$.
		We extra assume the coefficient of the term $z$ in $l$ is one, similar to the KP hierarchy case.
	\end{defn}
	
	\begin{lem}
		Supposing $l$ is a lifting operator, then
		\begin{align}\label{eqn:<0|lB}
			\langle0|\hat{l}^B=\langle0|\phi_0\phi_{-1}.
		\end{align}
	\end{lem}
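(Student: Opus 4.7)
The plan is to follow the same template as the proof of Lemma \ref{lem:<0|l} in the KP case, but using the neutral-fermion realization. Decompose the lifting operator as $l = z + l'$, where the $z$ term is the normalized energy-one contribution and $l'$ collects the remaining (admissible) pieces, i.e.\ any other energy-one terms together with the non-positive-energy part. The lemma then reduces to two assertions: (a) the leading $z$ summand contributes exactly $\langle 0|\phi_0\phi_{-1}$, and (b) $\langle 0|\hat{l'}^B = 0$.

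For step (a), I would use the explicit identification $\hat{z}^B = -H_1^B$ from Example \ref{exa:B w-W}, so that
\[
\langle 0|\hat{z}^B = -\tfrac{1}{2}\sum_{i\in\mathbb{Z}}(-1)^{i+1}\langle 0|{:}\phi_i\phi_{-i-1}{:}.
\]
Using the left-annihilation property $\langle 0|\phi_k = 0$ for $k>0$ together with the neutral-fermion anti-commutators \eqref{eqn:B comm. relations} to move creators past annihilators, only the modes $i=0$ and $i=-1$ survive, and combining their contributions yields exactly $\langle 0|\phi_0\phi_{-1}$.

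For step (b), I would expand $\hat{l'}^B$ directly from its definition. For each monomial $z^a\partial_z^b$ with $a\le b$ one obtains, by a residue computation paralleling the proof of Lemma \ref{thm_wpsicom},
\[
\widehat{(z^a\partial_z^b)}^B = \tfrac{1}{2}\sum_{i+j=b-a}(-1)^i [j]_b\, {:}\phi_i\phi_j{:},
\]
where $[j]_b = j(j-1)\cdots(j-b+1)$. A case analysis on which $(i,j)$ can contribute to $\langle 0|\cdot$ shows that every surviving mode is killed either by the vanishing of the falling factorial $[j]_b$ on the admissible range of $j$, or by $\langle 0|\phi_k = 0$ for $k>0$ after one application of the anti-commutation relations. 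I expect the delicate point, and the main obstacle, to be this index bookkeeping: compared with the KP computation it is fiddlier because of the $(-1)^i$ sign twist from \eqref{eqn:B comm. relations} and the asymmetric normal-ordering subtractions $\langle 0|\phi_i\phi_j|0\rangle$ for neutral fermions. Once the vanishing in (b) is established for each admissible monomial (with the same analysis also ruling out any further energy-one contributions permitted by Definition \ref{def:lifting B}), summing (a) and (b) gives the claimed identity $\langle 0|\hat{l}^B = \langle 0|\phi_0\phi_{-1}$.
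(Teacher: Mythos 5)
Your strategy coincides with the paper's: split off the $z$ term, identify $\hat{z}^B=-H_1^B$ from Example \ref{exa:B w-W}, and argue that everything else annihilates $\langle0|$ from the left; your step (a) is correct and is exactly the computation behind the paper's one-line proof $\langle0|\hat{l}^B=\langle0|\hat{z}^B=-\langle0|H_1^B=\langle0|\phi_0\phi_{-1}$. Two points in your step (b), however, do not hold as stated. For the energy-zero monomials $z^a\partial_z^a$ the surviving modes with $j>0$ are killed neither by the falling factorial nor by left annihilation alone: one application of \eqref{eqn:B comm. relations} gives $\langle0|\phi_{-j}\phi_j=(-1)^j\langle0|$, and it is the normal-ordering subtraction $\langle0|\phi_{-j}\phi_j|0\rangle=(-1)^j$ in \eqref{eqn:B ::} that cancels this constant, so that $\langle0|{:}\phi_{-j}\phi_j{:}=0$. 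Your list of killing mechanisms omits this cancellation, and note that the vacuum expectation table as printed in the paper (which records $1$ rather than $(-1)^j$ for $-i=j>0$) would even spoil it, so this bookkeeping has to be done directly from the anticommutators.

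More seriously, your parenthetical claim that the same analysis rules out any further energy-one contributions permitted by Definition \ref{def:lifting B} is false, and this is exactly where the B-case differs from Lemma \ref{lem:<0|l}. Against $\langle0|$ two boundary modes survive, $(i,j)=(0,-1)$ and $(-1,0)$, weighted by the action of the operator on $z^{0}$ and on $z^{-1}$: for an energy-one operator $P$ with $P\cdot z^j=p_j\,z^{j+1}$ one finds $\langle0|\hat{P}^B=\frac{p_0+p_{-1}}{2}\,\langle0|\phi_0\phi_{-1}$. For $P=z$ this yields the lemma, but $\langle0|\widehat{(z^{b+1}\partial_z^{b})}^B=\frac{(-1)^b\,b!}{2}\,\langle0|\phi_0\phi_{-1}\neq0$ for $b\geq1$, and $\langle0|\widehat{(z\partial_z)^{k}z}^B=\frac12\,\langle0|\phi_0\phi_{-1}$ for $k\geq1$. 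In the KP computation of Lemma \ref{lem:<0|l} only the single mode $\psi^*_{1/2}\psi_{-1/2}$ survives and its weight (the action on $z^{-1}$) vanishes for $(z\partial_z)^kz$, whereas here the second surviving mode carries the weight $p_0$, which does not vanish. So your argument, like the paper's terse proof which simply writes $\langle0|\hat{l}^B=\langle0|\hat{z}^B$, establishes \eqref{eqn:<0|lB} only when the energy-one part of the lifting operator is exactly $z$ (as for the BGW operator); if Definition \ref{def:lifting B} is read as allowing terms $(z\partial_z)^kz$, $k\geq1$, as in the spin Hurwitz example, one must additionally verify that the extra contributions cancel (there one checks $p_0=p_{-1}=1$, so the conclusion survives), and neither your step (b) nor the paper supplies that verification.
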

	{\bf Proof:}
	The proof is similar to Lemma \ref{lem:<0|l}.
	We only need to notice that
	\[\langle0|\hat{l}^B
	=\langle0|\hat{z}^B
	=-\langle0|H_1=\langle0|\phi_0\phi_{-1}.\]
	$\Box$

	Next, we prove our main result of B-type, Theorem \ref{thm:intro B main}.
	We restate it as follows.
	\begin{thm}[= Theorem \ref{thm:intro B main}]
		\label{thm:B main}
			Let $l_z$ be a lifting operator for the fermionic tau-function $\V$ of the BKP hierarchy, as defined in Definition~\ref{def:lifting B}. Let $\tilde{l}_z$ represent its anti-symmetrization, as specified in  \eqref{tildel}. Denote the BKP fermionic one-point functions $\Psi^B(v)$ and $\tilde \Psi^B(u)$,  as well as the fermionic two-point function  $\Psi^B(u,v)$ as those defined in Section~\ref{sec:main B}. Then we have
		\begin{align}
			\label{eqn:BKPmain}
			(\tilde{l}_u+\tilde{l}_v)\cdot\Psi^B(u,v)
			=\Psi^B(u)\tilde{\Psi}^B(v)
			-\Psi^B(v)\tilde{\Psi}^B(u).
		\end{align}
	\end{thm}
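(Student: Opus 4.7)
The plan is to mirror the strategy of Theorem~\ref{thm:main KP} but work with the neutral fermion commutation formula of Lemma~\ref{lem:B hatl-l} and Wick's theorem for neutral fermions.

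First, I would rewrite the action of $\tilde{l}_u + \tilde{l}_v$ on $\Psi^B(u,v)$ using Lemma~\ref{lem:B hatl-l}, which gives $\tilde{l}_z\cdot\phi(z) = \phi(z)\hat{l}^B - \hat{l}^B\phi(z)$. Applying this at $z=u$ and $z=v$ separately and summing, the inner $\hat{l}^B$ insertions between $\phi(u)$ and $\phi(v)$ cancel in pairs, and one is left with
\begin{equation*}
(\tilde{l}_u+\tilde{l}_v)\cdot\Psi^B(u,v)
=\frac{\langle 0|\phi(u)\phi(v)\,\hat{l}^B|V\rangle - \langle 0|\hat{l}^B\,\phi(u)\phi(v)|V\rangle}{\langle 0|V\rangle}.
\end{equation*}

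Next, I would apply the Kac--Schwarz property $\hat{l}^B|V\rangle = c_l |V\rangle$ from \eqref{eqn:l = c B} on the right factor and the identity $\langle 0|\hat{l}^B = \langle 0|\phi_0\phi_{-1}$ from \eqref{eqn:<0|lB} on the left factor. The first term becomes $c_l\Psi^B(u,v)$, and the second reduces to a four-neutral-fermion vacuum expectation value $\langle 0|\phi_0\phi_{-1}\phi(u)\phi(v)|V\rangle/\langle 0|V\rangle$.

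The main technical step is then to evaluate this four-point function by the Wick theorem for neutral fermions. Using $|V\rangle=c\cdot e^A|0\rangle$ together with $\langle 0|e^{-A}=\langle 0|$ (inserting $e^A e^{-A}$ around each fermion so that conjugated fermions are linear in the $\phi_k$), the Pfaffian-type Wick expansion produces three pairings:
\begin{equation*}
\frac{\langle 0|\phi_0\phi_{-1}|V\rangle}{\langle 0|V\rangle}\Psi^B(u,v)
-\frac{\langle 0|\phi_0\phi(u)|V\rangle}{\langle 0|V\rangle}\frac{\langle 0|\phi_{-1}\phi(v)|V\rangle}{\langle 0|V\rangle}
+\frac{\langle 0|\phi_0\phi(v)|V\rangle}{\langle 0|V\rangle}\frac{\langle 0|\phi_{-1}\phi(u)|V\rangle}{\langle 0|V\rangle}.
\end{equation*}
The three factors are, by definition, $c_l=\langle 0|\hat{l}^B|V\rangle/\langle 0|V\rangle$, and $\Psi^B, \tilde{\Psi}^B$ evaluated at $u$ or $v$. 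Substituting back, the $c_l\Psi^B(u,v)$ terms on the two sides cancel, leaving exactly \eqref{eqn:BKPmain}.

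The main obstacle is bookkeeping of signs in the neutral-fermion Wick expansion, since the $\phi_i$ satisfy the twisted anti-commutation relation \eqref{eqn:B comm. relations}; one must verify that the three pairings above assemble with the stated signs so that the $c_l\Psi^B(u,v)$ contributions cancel rather than reinforce. A minor subsidiary check is that $\langle 0|\hat{l}^B=\langle 0|\phi_0\phi_{-1}$ holds for every admissible lifting operator, not only the generator $z$, which follows exactly as in Lemma~\ref{lem:<0|l}: the degree-one contributions $(z\partial_z)^k z$ for $k\geq 1$ and the non-positive-degree contributions all annihilate $\langle 0|$ after normal-ordering, by the same vanishing argument used in the KP case.
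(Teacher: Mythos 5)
Your proposal is correct in its main chain and follows the paper's proof step for step: Lemma \ref{lem:B hatl-l} turns $(\tilde{l}_u+\tilde{l}_v)\cdot\Psi^B(u,v)$ into $\langle0|\phi(u)\phi(v)\hat{l}^B-\hat{l}^B\phi(u)\phi(v)|V\rangle/\langle0|V\rangle$, then the Kac--Schwarz property \eqref{eqn:l = c B} and the identity \eqref{eqn:<0|lB} reduce everything to the four-point function $\langle0|\phi_0\phi_{-1}\phi(u)\phi(v)|V\rangle$, whose Pfaffian-type Wick expansion you write with the same three pairings and the same signs as the paper, after which the $c_l=a_{1,0}$ terms cancel exactly as in the paper.

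One correction concerns your closing ``subsidiary check'': the identity \eqref{eqn:<0|lB} does \emph{not} follow ``exactly as in Lemma \ref{lem:<0|l}''. In the KP computation the dangerous diagonal mode carries the weight $a^k$, which vanishes at $a=0$; in the B-type realization one finds for $l=(z\partial_z)^k z$ that $\hat{l}^B=\tfrac12\sum_{i\in\mathbb{Z}}(i+1)^k(-1)^{i+1}:\phi_{-1-i}\phi_i:$, and acting on $\langle0|$ only the two boundary modes $i=0$ and $i=-1$ can survive. The mode $i=-1$ carries the weight $0^k$ and dies for $k\geq1$, but the mode $i=0$ carries the weight $(0+1)^k=1$, giving $\langle0|\,\widehat{(z\partial_z)^k z}^{B}=\tfrac12\langle0|\phi_0\phi_{-1}\neq 0$ for every $k\geq1$ (note the relevant vacuum pairing here involves indices summing to $-1$, so no normal-ordering constant interferes). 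Hence the degree-one corrections do not drop out by the KP vanishing argument; \eqref{eqn:<0|lB} is immediate only when the degree-one part of the lifting operator is exactly $z$ (as for the BGW operator \eqref{eqn:l for BGW}), and if terms $(z\partial_z)^k z$, $k\geq1$, were present their surviving multiples of $\langle0|\phi_0\phi_{-1}$ would rescale the right-hand side of \eqref{eqn:BKPmain}. Since your main argument invokes \eqref{eqn:<0|lB} as stated in the paper, this flaw affects only your proposed re-justification of that lemma, not the structure of your proof of the theorem, but the ``same vanishing argument as in the KP case'' claim should be removed or replaced by the explicit mode computation above.
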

	{\bf Proof:}
	We assume $\langle0|V\rangle=1$ for convenience.
	On the one hand, by Lemma \ref{lem:B hatl-l},
	\begin{align}
		\begin{split}
		\tilde{l}_u\cdot\Psi^B(u,v)
		=&\langle0|\tilde{l}_u\cdot\phi(u)\phi(v)|V\rangle
		=-\langle0|[\hat{l}^B,\phi(u)]\phi(v)|V\rangle\\
		=&-\langle0|\hat{l}^B\phi(u)\phi(v)-\phi(u)\hat{l}^B\phi(v)|V\rangle.
		\end{split}
	\end{align}
	On the other hand, still by Lemma \ref{lem:B hatl-l},
	\begin{align}
		\begin{split}
		\tilde{l}_v\cdot\Psi^B(u,v)
		=&\langle0|\phi(u)\tilde{l}_v\cdot\phi(v)|V\rangle
		=-\langle0|\phi(u)[\hat{l}^B,\phi(v)]|V\rangle\\
		=&-\langle0|\phi(u)\hat{l}^B\phi(v)-\phi(u)\phi(v)\hat{l}^B|V\rangle.
		\end{split}
	\end{align}
	Thus, by combining the above two equations,
	\begin{align*}
		(\tilde{l}_u+\tilde{l}_v)\cdot\Psi^B(u,v)
		=-\langle0|\hat{l}^B\phi(u)\phi(v)-\phi(u)\phi(v)\hat{l}^B|V\rangle.
	\end{align*}
	Since $l$ is a lifting operator of $|V\rangle$, we have $\hat{l}^B \cdot |V\rangle=c_l \cdot |V\rangle$ and from equation \eqref{eqn:<0|lB},
	\begin{align}
		\langle0|\hat{l}^B|V\rangle
		=\langle0|\phi_0\phi_{-1}|V\rangle=a_{1,0}.
	\end{align}
	Then, we have
	\begin{align}
		(\tilde{l}_u+\tilde{l}_v-c_l)\cdot\Psi^B(u,v)
		=&-\langle0|\phi_0\phi_{-1}\phi(u)\phi(v)|V\rangle.
	\end{align}
	By the Wick theorem, the right hand side of the above equation is equal to
	\begin{align}
		\begin{split}
		-\langle0|\phi_0\phi_{-1}|V\rangle \langle0|\phi(u)\phi(v)|V\rangle
		+&\langle0|\phi_0\phi(u)|V\rangle \langle0|\phi_{-1}\phi(v)|V\rangle\\
		-&\langle0|\phi_0\phi(v)|V\rangle \langle0|\phi_{-1}\phi(u)|V\rangle,
		\end{split}
	\end{align}
	which is further equal to
	\[-a_{1,0}\cdot\Psi^B(u,v)
	+\Psi^B(u)\tilde{\Psi}^B(v)
	-\Psi^B(v)\tilde{\Psi}^B(u).\]
	Thus,
	combining the above three equations,
	we have proved
	\begin{align}
		(\tilde{l}_u+\tilde{l}_v-c_l+a_{1,0})\cdot\Psi^B(u,v)
		=\Psi^B(u)\tilde{\Psi}^B(v)
		-\Psi^B(v)\tilde{\Psi}^B(u).
	\end{align}
	Finally,
	from equations \eqref{eqn:l = c B} and \eqref{eqn:<0|lB},
	we have
	\begin{align*}
		c_l
		=\langle0|\cdot \hat{l}^B \cdot |V\rangle
		=\langle0|\phi_0\phi_{-1} |V\rangle
		=a_{1,0}.
	\end{align*}
	This theorem is thus proved.
	$\Box$

	\section{Applications in KP and BKP hierarchies}\label{examples}
	
	In this section, we apply our main theorems to derive fermionic two-point functions for the generalized Kontsevich model and Br{\' e}zin--Gross--Witten model, which satisfy the KP hierarchy and BKP hierarchy, respectively.

	\subsection{KP application: the generalized Kontsevich model.}
	\label{GKM}
	
	The generalized Kontsevich matrix model is defined as a Hermitian matrix model with an external field (see \cite{KMMMZ}, and see also \cite{ACM, A21}).
	It is a generalization of the famous Kontsevich model, which was initially used by Kontsevich to prove Witten's conjecture (see \cite{K,W90}).
	When the potential function for the generalized Kontsevich model is a monomial,
	it is related to the Witten's $r$-spin model
	(see \cite{W,AM,FSZ}).
	
	In this paper, since we do not use the concrete formalism of matrix integral, we omit the definition of the generalized Kontsevich model via matrix integral for conciseness
	(see \cite{KMMMZ,ACM,A21} for more details).
	We mainly follow the notations in Section 3 in \cite{A21}.
	A generalized Kontsevich model is defined from a potential function $V(z)$. Here, we will only consider the polynomial potential.
	Then, the generalized Kontsevich model produces a tau-function of the KP hierarchy.
	Our main result in this section is to give a formula for the fermionic two-point function of the tau-function from the generalized Kontsevich model with general polynomial potential $V(z)$.
	This result will also be used by the authors to study the spectral curve of the generalized Kontsevich model in a separate paper \cite{GJYZ}.
	
	For convenience, we will also denote by $x(z)=V'(z)$, $U(z)=\frac{1}{\hbar}x'(z)$.
	Some Kac--Schwarz operators for the generalized Kontsevich model can be derived from the matrix model directly and are well-known.
	For example, the following two operators (see Lemma 3.3 in \cite{A21}) are Kac--Schwarz operators for the generalized Kontsevich model with the potential function $V(z)$
	\footnote{Note that our operators are in fact dual to operators used in \cite{A21}.
		They will not cause any essential problem and can be obtained just by letting $\hbar\rightarrow-\hbar$ from his notation.},
	\begin{equation}\label{eqn:KS GKM}
		\begin{aligned}
			l_V(z) &=&& z-\frac{1}{U(z)}\frac{\partial}{\partial z}+\frac{U'(z)}{2U(z)^2}, \\
			q_V(z) &=&& \frac{1}{\hbar}\big(x(l_V(z))-x(z)\big).
		\end{aligned}
	\end{equation}
	Here, the $U(z)$ in the denominator should be understood as a series near $z=\infty$.
	Moreover,
	$q_V(z)$ annihilates the fermionic one-point function of the generalized Kontsevich model.
	That is to say, one has
	\begin{align}\label{eqn:qsc}
		\big(x(l_V(z))-x(z)\big)\cdot \Psi(z)=0
	\end{align}
	and its dual version
	\begin{align}\label{eqn:qsc dual}
		\big(x(l_V^*(z))-x(z)\big)\cdot \Psi^*(z)=0.
	\end{align}
	
	One can also notice that
	the leading term of $l_V(z)$ is $z$.
	Thus, as a corollary,
	\begin{cor}
		$l_V$ is a lifting operator of the generalized Kontsevich model determined by the potential function $V$.
	\end{cor}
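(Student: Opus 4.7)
The plan is to verify the three ingredients of Definition \ref{def:lifting}: (a) $l_V$ is a Kac--Schwarz operator; (b) $l_V$ has the structural form described in Lemma \ref{lzform} with normalizing coefficient $c_0 = 1$; and (c) there exists an admissible basis $\{\varphi_k\}$ for the Sato subspace $W$ attached to the generalized Kontsevich tau-function so that $(z^{1/2} l_V z^{-1/2})^k \varphi_0 = a_k \varphi_k$ for some constants $a_k$.

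Item (a) is Lemma 3.3 of \cite{A21}, already cited in the discussion around \eqref{eqn:KS GKM}. For item (b), I would simply inspect \eqref{eqn:KS GKM}: since $V$ is polynomial, so is $U = x'/\hbar$, hence $1/U(z)$ and $U'(z)/U(z)^2$ are Laurent expansions at infinity supported in strictly non-positive powers of $z$. The middle term $-U^{-1}\partial_z$ then contributes only combinations of $z^a \partial_z$ with $a \leq 0$, and the last term $U'/(2U^2)$ is a series in $z^{-1}$; both fall inside the non-positive degree summand of Lemma \ref{lzform}. The only summand of degree one is the leading $z$, supplying $c_0 = 1$.

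The substantive step is (c). I would set $L := z^{1/2} l_V z^{-1/2}$, take $\varphi_0(z) := z^{1/2} \Psi(z)$ to be the first canonical basis element from \eqref{vpk}, and define $\varphi_k := L^k \varphi_0$. Conjugation by $z^{1/2}$ does not alter the degree-one part, so the leading piece of $L$ is still multiplication by $z$; a short induction then yields $\varphi_k = z^{k+1/2} + (\text{strictly lower half-integer powers})$, which is precisely the form \eqref{adba} of an admissible basis element. The Kac--Schwarz property of $l_V$ guarantees that $\varphi_k \in W$ for every $k$. The main point requiring care is that $\{\varphi_k\}_{k \geq 0}$ actually spans $W$ and is not merely a sequence of elements of it; I would extract this from the uniqueness of the canonical basis \eqref{canobasis}: any family of linearly independent elements of $W$ whose leading exponents exhaust $\{z^{k+1/2}\}_{k \geq 0}$ must be admissible by upper-triangular Gauss elimination against the canonical basis. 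With spanning established, the defining iterate identity holds by construction with $a_k = 1$, and all three conditions of Definition \ref{def:lifting} are met.
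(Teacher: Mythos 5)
Your proof is correct and takes essentially the same route the paper does: the paper deduces the corollary in one line from the Kac--Schwarz property of $l_V$ cited from \cite{A21} together with the observation that its only degree-one term is $z$, so that iterating $z^{1/2}l_V z^{-1/2}$ on $\varphi_0 = z^{1/2}\Psi(z)$ produces vectors in the subspace with leading terms $z^{k+1/2}$. Your steps (b)--(c) merely make explicit the degree bookkeeping and the unitriangular comparison with the canonical basis that the paper treats as immediate.
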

	
	With the help of the lifting operator, we can apply our main theorem to derive the fermionic two-point function for the generalized Kontsevich model.
	\begin{prop}
		\label{thm:GKM}
		For the generalized Kontsevich model with polynomial potential $V(z)$ and $x(z)=V'(z)$, the fermionic two-point function is given by:
		\begin{align}\label{eqn:GKM two-point}
			\begin{split}
			\Psi(u,v)=&\frac{W\big(l_V^*(u),l_V(v)\big)}{x(u)-x(v)}
			\cdot\Psi^*(u)\Psi(v)\\
			&\qquad\qquad\in\frac{1}{u-v}+u^{-1}v^{-1}\mathbb{C}[\![u^{-1},v^{-1}]\!],
			\end{split}
		\end{align}
		where $\Psi(v)$ and $\Psi^*(u)$ are the fermionic one-point and dual one-point functions of the generalized Kontsevich model,
		$$W(u,v):=\frac{x(u)-x(v)}{u-v}
		\in\mathbb{C}[u,v]$$
		is a polynomial in $u,v$ , and $l_V^*$ is the adjoint of the lifting operator $l_V$ given in equation \eqref{eqn:KS GKM}.
	\end{prop}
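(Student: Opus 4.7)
The plan is to combine Theorem \ref{thm:main KP}, applied with the lifting operator $l_V$, with the second Kac--Schwarz constraint provided by the operator $q_V$ from \eqref{eqn:KS GKM}.

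First, since $l_V$ is a lifting operator for the generalized Kontsevich tau-function, Theorem \ref{thm:main KP} gives
\[
(l_V^*(u) - l_V(v))\Psi(u,v) = \Psi^*(u)\Psi(v).
\]
Because $l_V^*(u)$ and $l_V(v)$ act on independent variables they commute as differential operators, so I can apply $W(l_V^*(u), l_V(v))$ to both sides and use the elementary identity $W(a,b)(a-b) = x(a) - x(b)$ (valid whenever $a,b$ commute, by telescoping) to obtain
\[
(x(l_V^*(u)) - x(l_V(v)))\Psi(u,v) = W(l_V^*(u), l_V(v))\Psi^*(u)\Psi(v).
\]

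The crux is then to show the left-hand side equals $(x(u) - x(v))\Psi(u,v)$, equivalently $q_V^*(u)\Psi(u,v) = q_V(v)\Psi(u,v)$. To this end I would establish two facts about $q_V$: (i) $\widehat{q_V}\ket{V} = 0$, and (ii) $\bra{0}\widehat{q_V} = 0$. For (ii), a direct degree count shows that $q_V$ has strictly negative $w_{1+\infty}$ degree: writing $l_V = z + \Delta$ with $\Delta$ of degree $\leq -(r-1)$ for $V$ a polynomial of degree $r$, every term in the expansion of $l_V^k - z^k$ contains at least one factor of $\Delta$ and hence has degree $\leq k - r$, so $q_V = (x(l_V) - x)/\hbar$ has degree $\leq (r-1) - r = -1$; the normal-ordering argument from the proof of Lemma \ref{lem:<0|l} then gives $\bra{0}\widehat{q_V} = 0$. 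For (i), since $q_V$ is Kac--Schwarz we have $\widehat{q_V}\ket{V} = c_q \ket{V}$, and pairing with $\bra{0}$ via (ii) forces $c_q = 0$.

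With (i) and (ii) in hand, a commutator computation parallel to the proof of Theorem \ref{thm:main KP}, based on Lemma \ref{thm_wpsicom}, yields
\[
\hbar q_V^*(u)\Psi(u,v)\braket{0|V} = -\bra{0}\psi^*(u)\widehat{\hbar q_V}\psi(v)\ket{V} = \hbar q_V(v)\Psi(u,v)\braket{0|V},
\]
where the left equality follows from expanding $[\widehat{\hbar q_V}, \psi^*(u)]$ and using (ii) to drop the $\bra{0}\widehat{\hbar q_V}$ term, and the right equality follows from expanding $[\widehat{\hbar q_V}, \psi(v)]$ and using (i) to drop the $\widehat{\hbar q_V}\ket{V}$ term. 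Substituting back gives $(x(u) - x(v))\Psi(u,v) = W(l_V^*(u), l_V(v))\Psi^*(u)\Psi(v)$, which is the claimed formula after formal division; the containment in $\frac{1}{u-v} + u^{-1}v^{-1}\C[\![u^{-1}, v^{-1}]\!]$ is immediate from the definition of $\Psi(u,v)$ recalled in Section \ref{sec:main A}. The main obstacle I anticipate is step (ii): the degree count for a general polynomial potential $V$ requires carefully tracking non-commutative compositions of differential operators, although the structure $l_V = z + \Delta$ with $\Delta$ of strictly negative $w_{1+\infty}$ degree makes the expansion tractable.
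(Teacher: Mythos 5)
Your proposal is correct, and it reaches the formula by a route that genuinely differs from the paper's in one respect. The paper argues backwards: it sets $\bar\Psi(u,v)$ equal to the claimed right-hand side, verifies $(l_V^*(u)-l_V(v))\bar\Psi(u,v)=\Psi^*(u)\Psi(v)$ by commuting $l_V^*(u)-l_V(v)$ past $\tfrac{1}{x(u)-x(v)}$ (using $[l_V(v),x(v)]=\hbar$, $[l_V^*(u),x(u)]=-\hbar$) and then invoking the one-point quantum-curve equations \eqref{eqn:qsc}, \eqref{eqn:qsc dual} quoted from the literature, and finally appeals to uniqueness of solutions of the equation in Theorem \ref{thm:main KP} with the prescribed leading behaviour. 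You instead work forwards from $(l_V^*(u)-l_V(v))\Psi(u,v)=\Psi^*(u)\Psi(v)$, apply $W(l_V^*(u),l_V(v))$, and prove the two-point identity $q_V^*(u)\Psi(u,v)=q_V(v)\Psi(u,v)$ directly by the same fermionic mechanism (Lemma \ref{thm_wpsicom}) used for Theorem \ref{thm:main KP}, via your facts (i) $\widehat{q_V}\ket{V}=0$ and (ii) $\bra{0}\widehat{q_V}=0$. Both supporting facts hold: your degree estimate is right (with $\deg V=r$ every monomial of $q_V$ is of the form $z^k\partial_z^l$ with $k-l\le -1$, so the normal-ordering argument in the second half of the proof of Lemma \ref{lem:<0|l} gives (ii)), and (i) follows from the Kac--Schwarz property of $q_V$ recorded with \eqref{eqn:KS GKM}, together with (ii) and $\braket{0|V}\neq0$. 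What your route buys is that it needs neither the one-point equations \eqref{eqn:qsc}, \eqref{eqn:qsc dual} nor the uniqueness claim (which the paper asserts without detail); the final ``formal division'' is harmless since multiplication by the nonzero polynomial $x(u)-x(v)$ is injective on the space of formal series where $\Psi(u,v)$ and $\Psi^*(u)\Psi(v)$ live, and the stated containment is, as you say, immediate from the affine-coordinate expansion of $\Psi(u,v)$. What the paper's route buys is brevity: granting \eqref{eqn:qsc}--\eqref{eqn:qsc dual}, the single commutation $[l_V^*(u)-l_V(v),x(u)-x(v)]=0$ replaces your degree count and the Kac--Schwarz argument for $q_V$.
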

	
	\begin{proof}
		Denote the right hand side of equation \eqref{eqn:GKM two-point} by $\bar{\Psi}(u,v)$,
		by using equation in Theorem \ref{thm:main KP} and the uniqueness of solution of that equation with certain leading terms, one only needs to show that
		\begin{align}\label{eqn:GKM final}
			\big(l_V(u)^*-l_V(v)\big)\cdot \bar{\Psi}(u,v)
			=\Psi^*(u) \Psi(v).
		\end{align}
		
		Since
		$[l_V(v),x(v)]=\hbar$ and $[l^*_V(u),x(u)]=-\hbar$,
		we have
		\begin{equation}
			[l^*_V(u)-l_V(v), x(u)-x(v)]=0.
		\end{equation}
		Thus, we obtain
		\begin{align*}
			\big(l_V(u)^*-l_V(v)\big)\cdot \bar{\Psi}(u,v)
			=\frac{\big(x(l_V^*(u))-x(l_V(v))\big)}{x(u)-x(v)}
			\cdot\Psi^*(u)\Psi(v).
		\end{align*}
		By equations \eqref{eqn:qsc} and \eqref{eqn:qsc dual},
		the right hand side of the above equation equals to
		$\Psi^*(u) \Psi(v)$.
		Thus, the above equation is equivalent to equation \eqref{eqn:GKM final}
		and this proposition is proved.
	\end{proof}

	\begin{rmk}\label{rmk:ACM and A21}
		In \cite{ACM,A21},
		other formulas for the Cauchy--Baker--Akhiezer kernel of the generalized Kontsevich model were derived.
		Their formulas depend on complicated integrals (see equation (3.16) in \cite{ACM} and equation (3.31) in \cite{A21}).
		It will be interesting to compare their formulas and ours.
	\end{rmk}

	\subsection{KP application: the $r$-spin model}
	
	Witten \cite{W} introduced the $r$-spin model and conjectured that it is related to the $r$-KdV hierarchy.
	This conjecture has been proved in \cite{FSZ}.
	The partition function of the $r$-spin model is the generating function of certain intersection numbers over the moduli spaces of $r$-spin curves.
	Thus, this result establishes the deep connection between the geometry of moduli spaces and integrable systems.
	
	Since the $r$-KdV hierarchy is a reduction of the KP hierarchy,
	one can apply Theorem \ref{thm:intro A} established in this paper to study the $r$-spin model.
	The tau-function $\tau^{(r)}(\mathbf{t})$ of the $r$-spin model satisfies the following string equation (see \cite{W})
	\begin{align*}
		\Bigg(\sum_{n\geq1}(n+r)t_{n+r}\frac{\partial}{\partial t_n}
		+\frac{1}{2}\sum_{a+b=r}abt_at_b
		-\frac{1}{\hbar}\frac{\partial}{\partial t_1}\Bigg)
		\cdot \tau^{(r)}(\mathbf{t})=0.
	\end{align*}
	Thus, comparing to $W_{1+\infty}$ operators listed in Example \ref{exa:w-W}, a lifting operator of $\tau^{(r)}$ could be
	\begin{align}\label{eqn:l r-spin}
		l_z:=z-\hbar z^{-r}\big(z\partial_z+\frac{-r+1}{2}\big).
	\end{align}
	The corresponding constant in equation \eqref{eqn:l action = c} is $c_l=0$.
	The adjoint operator of $l_z$ is
	\begin{align}
		\begin{split}
		l_z^*=&z-\hbar
		\big(-\partial_z z^{-r+1}+\frac{-r+1}{2}z^{-r}\big)\\
		=&z+\hbar
		\big(z^{-r+1}\partial_z+\frac{-r+1}{2}z^{-r}\big).
		\end{split}
	\end{align}
	
	Comparing to the Kac--Schwarz operators for the generalized Kontsevich model in equation \eqref{eqn:KS GKM},
	one can see that,
	the $r$-spin model corresponds to the special case of the generalized Kontsevich model satisfying
	$V(z)=\frac{z^{r+1}}{r(r+1)}$.
	Thus, just by applying Proposition \ref{thm:GKM},
	we obtain a formula for the fermionic two-p  oint function of $r$-spin theory.
	\begin{cor}[= Corollary \ref{thm: intro r-spin}]
		\label{thm: r-spin}
		The fermionic two-point function for the $r$-spin model is given by
		\begin{align}\label{eqn:2-point for r-spin}
			\Psi(u,v)=\frac{\big(\sum_{a=0}^{r-1}(l_u^*)^a (l_v)^{r-1-a}\big) \cdot \Psi^*(u) \Psi(v)} {u^r-v^r},
		\end{align}
		where $\Psi(v)$ and $\Psi^*(u)$ are the fermionic one-point and dual one-point functions of the r-spin model,
		$l$ is the lifting operator given in equation \eqref{eqn:l r-spin} and $l^*$ is its adjoint.
	\end{cor}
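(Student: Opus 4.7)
The plan is to deduce this corollary directly from Proposition \ref{thm:GKM}, since the $r$-spin model is a specific instance of the generalized Kontsevich model. Concretely, I would take the monomial potential $V(z) = \frac{z^{r+1}}{r(r+1)}$, so that $x(z) = V'(z) = \frac{z^r}{r}$ and $U(z) = \frac{1}{\hbar}x'(z) = \frac{z^{r-1}}{\hbar}$. The first step is then to check that under this choice the Kac--Schwarz operator $l_V$ defined in \eqref{eqn:KS GKM} coincides with the lifting operator $l_z$ of \eqref{eqn:l r-spin}. This is a short computation: substituting $U$ and $U' = (r-1)z^{r-2}/\hbar$ gives
\begin{equation*}
l_V(z) = z - \hbar z^{-(r-1)}\partial_z + \tfrac{(r-1)\hbar}{2} z^{-r} = z - \hbar z^{-r}\bigl(z\partial_z + \tfrac{-r+1}{2}\bigr),
\end{equation*}
which matches \eqref{eqn:l r-spin}. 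Once this identification is secured, Proposition \ref{thm:GKM} applies with the same lifting operator used in the corollary statement.

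The second step is to evaluate the universal polynomial $W(u,v) = (x(u) - x(v))/(u - v)$ for this choice of $x$. Since $x(u) - x(v) = (u^r - v^r)/r$, the geometric series expansion gives $W(u,v) = \frac{1}{r}\sum_{a=0}^{r-1} u^a v^{r-1-a}$. Because operators in the variable $u$ commute with operators in the variable $v$, replacing $u \mapsto l_V^*(u)$ and $v \mapsto l_V(v)$ produces the unambiguous operator $W(l_V^*(u), l_V(v)) = \frac{1}{r}\sum_{a=0}^{r-1} (l_u^*)^a (l_v)^{r-1-a}$. Substituting this together with $x(u) - x(v) = (u^r - v^r)/r$ into the formula of Proposition \ref{thm:GKM} causes the two factors of $1/r$ to cancel, yielding exactly the claimed expression \eqref{eqn:2-point for r-spin}.

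There is essentially no genuine obstacle here: the corollary is a pure specialization of Proposition \ref{thm:GKM} to a monomial potential. The only point requiring care is the matching of conventions in the first step (the sign of $\hbar$, the normalization of $V$, and the symmetrization term $\frac{U'}{2U^2}$), and confirming that the universal formula from the generalized Kontsevich setting, once the factors of $r$ are tracked, reproduces the stated denominator $u^r - v^r$ rather than $(u^r - v^r)/r$. Everything else follows mechanically from the already-established Theorem \ref{thm:main KP} (via Proposition \ref{thm:GKM}) and the explicit form of $x(z)$.
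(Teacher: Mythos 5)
Your proposal is correct and follows exactly the paper's own route: the corollary is obtained by specializing Proposition \ref{thm:GKM} to the monomial potential $V(z)=\frac{z^{r+1}}{r(r+1)}$, after checking that $l_V$ from \eqref{eqn:KS GKM} reproduces the lifting operator \eqref{eqn:l r-spin}. Your explicit verification of that matching and of the cancellation of the factors of $1/r$ in $W(u,v)=\frac{1}{r}\sum_{a=0}^{r-1}u^a v^{r-1-a}$ versus $x(u)-x(v)=(u^r-v^r)/r$ simply spells out details the paper leaves implicit.
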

	
	\begin{rmk}
		It is known that fermionic one-point function $\Psi(z)$ can be solved by the equations \eqref{eqn:qsc} and \eqref{eqn:qsc}.
		Thus, equation \eqref{eqn:2-point for r-spin} gives a simple formula for the fermionic two-point function of Witten's $r$-spin model explicitly.
		It is a generalization of the result in \cite{TW,O01,O02}, (see also \cite{Z15,DYZ} for other methods) for the $r=2$,
		in which case $\Psi(w,z)$ is closely related to the well-known Airy kernel.
	\end{rmk}
	
	\subsection{BKP application: Br{\' e}zin--Gross--Witten model}
	\label{sec:BGW}
	The BGW model was introduced by Br{\' e}zin, Gross, and Witten when studying the lattice gauge theory (see \cite{BG, GW}).
	The enumerative geometric interpretation of this model was conjecturally proposed in \cite{N17},
	which states that the tau-function of the BGW model $\tau_{BGW}(\hat{\mathbf{t}})$ is also a generating function of some certain intersection numbers over the moduli spaces of stable curves.
	This conjecture was proved in \cite{CGG}.
	Recently, the BGW model was intensively studied, especially in its BKP structure (see \cite{A23,LY2,LY3,WY}).
	The conclusion is that $\tau_{BGW}(\hat{\mathbf{t}}/2)$ is a hypergeometric tau-function of the BKP hierarchy.
	
	The first thing for applying Theorem \ref{thm:B main} is to find a lifting operator for this model.
	The Virasoro constraints for the BGW model were derived in \cite{GN}.
	The first Virasoro equation for $\tilde{\tau}_{BGW}(\hat{\mathbf{t}}):=\tau_{BGW}(\hat{\mathbf{t}}/2)$ can be written as
	\begin{align*}
		\Big(\frac{1}{2}\sum_{k\in\mathbb{Z}_{+,odd}}kt_{k}\frac{\partial}{\partial t_k}
		-\frac{1}{\hbar}\frac{\partial}{\partial t_1} +\frac{1}{16} \Big)
		\cdot \tilde{\tau}_{BGW}(\hat{\mathbf{t}})=0.
	\end{align*}
	Thus, by Example \ref{exa:B w-W}, the lifting operator for $\tilde{\tau}_{BGW}(\hat{\mathbf{t}})$ could be
	\begin{align}\label{eqn:l for BGW}
		l_z:=z-\frac{\hbar z}{2}\partial_z \in w^B_{1+\infty},
	\end{align}
	and corresponding $c_l=-\frac{\hbar}{16}$.
	Since $l\in w^B_{1+\infty}$, we obtain $\tilde{l}=l$ in this case.
	
	Recall that the first and second fermionic one-point functions in the BKP hierarchy are the generating series of the first two rows of B-type affine coordinates in equation \eqref{eqn:B psi-a}.
	And in the BGW model case, corresponding affine coordinates were originally derived in \cite{WY}.
	We can list the explicit expressions for $\Psi^B(z)$ and $\tilde{\Psi}^B(z)$ as follows.
	\begin{ex}
		The first and second fermionic one-point functions of $\tilde{\tau}_{BGW}(\hat{\mathbf{t}})$ are
		\begin{align*}
			\Psi^B(z)
			=&\frac{1}{2}+\sum_{n>0} (-1)^n \frac{((2n-1)!!)^2}{2^{3n+1}\cdot n!} \hbar^n z^{-n},\\
			\tilde{\Psi}^B(z)
			=&-z+\frac{\hbar}{16}-2\sum_{n>0}(-1)^n \frac{(n-1)((2n-1)!!)^2}{2^{3n+5}\cdot(n+1)n!} \hbar^{n+1} z^{-n}.
		\end{align*}
	\end{ex}
	Using the above explicit formulas, one can check the following lemma directly.
	\begin{lem}
		The relation between $\Psi^B(z)$ and $\tilde{\Psi}^B(z)$ is
		\begin{align}\label{eqn:BGW psi-tpsi}
			\tilde{\Psi}^B(z) = -2 (l_z+\frac{\hbar}{16}) \cdot \Psi^B(z),
		\end{align}
		where $l_z$ is the lifting operator in equation \eqref{eqn:l for BGW}.
		Moreover, $\Psi^B(z)$ satisfies
		\begin{align}\label{eqn:BGW l-equation}
			(l_z^2-zl_z+\frac{\hbar}{2}l_z+\frac{\hbar^2}{16}) \cdot \Psi^B(z)=0.
		\end{align}
	\end{lem}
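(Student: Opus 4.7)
The plan is to verify both identities by direct substitution of the explicit series expressions for $\Psi^B(z)$ and $\tilde{\Psi}^B(z)$ given in the example just above the lemma. Writing $c_n := (-1)^n ((2n-1)!!)^2 / (2^{3n+1} n!)$, so that $\Psi^B(z) = \tfrac{1}{2} + \sum_{n \geq 1} c_n \hbar^n z^{-n}$, the whole argument reduces to elementary double-factorial algebra using $(2n+1)!! = (2n+1)(2n-1)!!$ and $(n+1)! = (n+1)\,n!$; no fermionic manipulation is needed.

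For the first identity, apply the operator $-2(l_z + \hbar/16) = -2z + \hbar z \partial_z - \hbar/8$ termwise. The $z^1$ piece is $-z$, coming from $-2z \cdot \tfrac{1}{2}$; the $z^0$ piece is $-2 c_1 \hbar - \hbar/16 = \hbar/16$ since $c_1 = -\tfrac{1}{16}$; and the coefficient of $\hbar^{n+1} z^{-n}$ for $n \geq 1$ is $-2 c_{n+1} - (n + \tfrac{1}{8}) c_n$. Matching this against the stated series for $\tilde{\Psi}^B(z)$ reduces to a single combinatorial identity in $n$, verified by the double-factorial relations above.

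For the second identity, I would first simplify the composite operator by a direct Weyl-algebra computation, using $\partial_z z = z \partial_z + 1$, to get
$$ l_z^2 - z l_z + \frac{\hbar}{2} l_z + \frac{\hbar^2}{16} \;=\; \frac{\hbar^2 z^2}{4}\partial_z^2 - \frac{\hbar z^2}{2}\partial_z + \frac{\hbar^2}{16}. $$
All positive-degree (in $z$) contributions cancel, so the claim becomes a second-order linear ODE for $\Psi^B$. Substituting the series yields the constant-term check $\tfrac{c_1}{2} + \tfrac{1}{32} = 0$ (which holds because $c_1 = -\tfrac{1}{16}$) together with the three-term recursion $c_{n+1} = -\tfrac{(2n+1)^2}{8(n+1)}\, c_n$ for $n \geq 1$. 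Both are immediate from the closed form of $c_n$ via $(2n+1)!!/(2n-1)!! = 2n+1$.

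The main obstacle is only bookkeeping — tracking signs, powers of $2$, and factorial vs.\ double-factorial identities. A structurally cleaner but longer alternative would sidestep the explicit series entirely: derive the first identity fermionically by evaluating $\langle 0|\phi_0\, \hat{l}^B \phi(z)|V\rangle$ in two ways, using the eigenproperty $\hat{l}^B|V\rangle = -\tfrac{\hbar}{16}|V\rangle$ together with the commutator $[\hat{l}^B, \phi(z)] = -l_z \phi(z)$ from Lemma~\ref{lem:B hatl-l} and the identity $\langle 0|\hat{l}^B = \langle 0|\phi_0 \phi_{-1}$; then obtain the second identity by applying $l_z$ to the first and combining with an analogous fermionic computation for $l_z \tilde{\Psi}^B$. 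This avoids the double-factorial check at the cost of careful neutral-fermion normal-ordering.
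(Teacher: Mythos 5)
Your proposal is correct and matches the paper's approach: the paper simply notes that the lemma "can be checked directly" from the explicit series for $\Psi^B(z)$ and $\tilde{\Psi}^B(z)$, which is exactly your termwise verification (and your reductions — the simplified operator $\frac{\hbar^2 z^2}{4}\partial_z^2-\frac{\hbar z^2}{2}\partial_z+\frac{\hbar^2}{16}$, the constant-term check, and the recursion $c_{n+1}=-\frac{(2n+1)^2}{8(n+1)}c_n$ — are all accurate).
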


	\begin{rmk}
		The equation \eqref{eqn:BGW l-equation} is related to the quantum spectral curve of the BGW model.
	\end{rmk}

	\begin{thm}
		The fermionic two-point function with respect to $\tilde{\tau}_{BGW}(\hat{\mathbf{t}})$ is
		\begin{align}\label{eqn:2-for-BGW}
			\Psi^B(u,v)=
			\frac{-2(-2l_u+2l_v+u-v)\Psi^B(u)\Psi^B(v)}{\hbar(u+v)}.
		\end{align}
	\end{thm}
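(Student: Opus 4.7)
The plan is to specialise Theorem~\ref{thm:B main} to the BGW lifting operator from \eqref{eqn:l for BGW} and then use the two auxiliary identities \eqref{eqn:BGW psi-tpsi} and \eqref{eqn:BGW l-equation} satisfied by $\Psi^B(z)$ to reduce the theorem to a direct computational verification. Since $l_z = z - \tfrac{\hbar z}{2}\partial_z$ lies in $w^B_{1+\infty}$, its anti-symmetrisation coincides with itself, $\tilde{l} = l$, so Theorem~\ref{thm:B main} reads
\begin{equation*}
(l_u + l_v)\Psi^B(u,v) = \Psi^B(u)\tilde{\Psi}^B(v) - \Psi^B(v)\tilde{\Psi}^B(u).
\end{equation*}
Substituting \eqref{eqn:BGW psi-tpsi} on the right, the two $\tfrac{\hbar}{16}$-contributions cancel symmetrically, so writing $R := \Psi^B(u)\Psi^B(v)$ and noting $l_uR = (l_u\Psi^B(u))\Psi^B(v)$ etc., the equation simplifies to
\begin{equation*}
(l_u + l_v)\Psi^B(u,v) = 2(l_u - l_v) R.
\end{equation*}

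The next step is to verify that the proposed right-hand side $\Phi(u,v)$ of \eqref{eqn:2-for-BGW} satisfies this reduced equation. Writing $\Phi = Q/(u+v)$ where $Q$ is a first-order combination of $l_uR, l_vR$ and $R$, one applies $l_u + l_v = (u+v) - \tfrac{\hbar}{2}(u\partial_u + v\partial_v)$ to $\Phi$ via the Leibniz rule, using the simple identity $(u\partial_u + v\partial_v)(u+v)^{-1} = -(u+v)^{-1}$ to push the operator past the pole. The resulting expression involves $l_u^2 R$ and $l_v^2 R$, which I would eliminate by invoking the quadratic identity coming from \eqref{eqn:BGW l-equation}, namely $l_u^2 \Psi^B(u) = (u - \tfrac{\hbar}{2}) l_u \Psi^B(u) - \tfrac{\hbar^2}{16} \Psi^B(u)$ and its analogue in $v$, which reduce each second-order action to a first-order one plus a scalar multiple of $R$. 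After collecting like terms, the identity should close up to $2(l_u - l_v)R$.

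To conclude, I would appeal to uniqueness: the reduced equation $(l_u + l_v)F = 2(l_u - l_v)R$, with the leading behaviour $\tfrac{u-v}{2(u+v)}$ of $\Psi^B(u,v)$ fixed by Section~\ref{sec:main B}, determines its solution recursively on the space of formal Laurent series in $u^{-1},v^{-1}$, since multiplication by $(u+v)$ (the top-order part of $l_u + l_v$) shifts gradings in an invertible way. The leading-term check is direct from the $\hbar$-expansions of $\Psi^B$ in the Example. The main obstacle I foresee is the bookkeeping in the verification step: keeping track of the $\hbar$-powers while simultaneously pushing $l_u + l_v$ past the pole $(u+v)^{-1}$ and applying the quadratic relation \eqref{eqn:BGW l-equation} at the right moment is delicate, and it is precisely at this juncture that the quantum spectral curve of the BGW model enters essentially to make the cancellations work.
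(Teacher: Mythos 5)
Your proposal follows essentially the same route as the paper's proof: specialize Theorem~\ref{thm:B main} (with $\tilde{l}=l$ since $l\in w^B_{1+\infty}$), then verify that the candidate right-hand side of \eqref{eqn:2-for-BGW} satisfies the resulting bilinear equation by applying $l_u+l_v$ across the pole $(u+v)^{-1}$, eliminating the $l^2$-terms with the quantum-spectral-curve relation \eqref{eqn:BGW l-equation}, and converting back with \eqref{eqn:BGW psi-tpsi}. The only differences are cosmetic: you rewrite the right-hand side as $2(l_u-l_v)\Psi^B(u)\Psi^B(v)$ at the start rather than at the end, and you make explicit the uniqueness-of-solution/leading-term argument that the paper leaves implicit in the BKP case.
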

	{\bf Proof:}
	For convenience, write the right hand side of equation \eqref{eqn:2-for-BGW} as $\bar{\Psi}^B(u,v)$. Then, by Theorem \ref{thm:B main}, to prove this proposition, one only needs to show that
	\begin{align}\label{eqn:BGW condition}
		(l_u+l_v)\cdot\bar{\Psi}^B(u,v)
		=\Psi^B(u)\tilde{\Psi}^B(v)
		-\Psi^B(v)\tilde{\Psi}^B(u)
	\end{align}
	for this model.
	
	Firstly, using the definition equation \eqref{eqn:l for BGW} of the lifting operator $l$, one directly obtains
	\begin{align*}
		(l_u+&l_v)\cdot\bar{\Psi}^B(u,v)\\
		=&\frac{-2(-2l_u^2+ul_u-\frac{\hbar u}{2} -vl_u+2l_v^2+ul_v-vl_v+\frac{\hbar v}{2})\Psi^B(u)\Psi^B(v)} {\hbar(u+v)}\\
		&-\frac{(-2l_u+2l_v+u-v)\Psi^B(u)\Psi^B(v)}{u+v}.
	\end{align*}
	Then, using equation \eqref{eqn:BGW l-equation} to simplify the $l^2$ terms in the right hand side of the above equation and using the relation \eqref{eqn:BGW psi-tpsi} between $\Psi^B(z)$ and $\tilde{\Psi}^B(z)$, one can obtain
	\begin{align*}
		(&l_u+l_v)\cdot\bar{\Psi}^B(u,v)\\
		=&\frac{-u\tilde{\Psi}^B(u)\Psi^B(v)+u\Psi^B(u)\tilde{\Psi}^B(v)
			-v\tilde{\Psi}^B(u)\Psi^B(v)+v\Psi^B(u)\tilde{\Psi}^B(v)}{(u+v)}.
	\end{align*}
	The above equation is equivalent to equation \eqref{eqn:BGW condition}.
	Thus, this proposition is proved.
	$\Box$
	
	\begin{rmk}
		The formula for B-type fermionic two-point function of BGW model was initially obtained by Z. Wang and the third-named author via a totally different method \cite{WY}
		 (see also \cite{WYZ} for a generalization to the generalized BGW model).
	\end{rmk}
	
	\subsection{Further examples} \label{futherexamples}
	
    Indeed, in addition to the examples presented above, there are numerous other theories where our method can be applied. For instance, as studied by Chen and the first-named author in  \cite{CG}, the fermionic two-point functions of orbifold Gromov--Witten theory of $\mathbb{P}[r]$ can also be deduced using Theorem \ref{thm:main KP}.
    In what follows, we present a compilation of several examples along with their crucial data. This data encompasses the lifting operators, fermionic one-point functions, and two-point functions. Here, we utilize the notation $\hat{p}= \hbar \frac{d}{dx}$. These examples have been widely studied in various literature, and we refer readers to find details in the references below.

\begin{table}[h]
\centering
\footnotesize
	\begin{tabular}{|c|c|c|c|c|}\hline
	\hbox{\small KP Model}&	\hbox{\small KP coor.}   &
	\hbox{\small Lifting Operator} \\\hline
	\text{Simple Hurwitz} & $z=e^{x}$ &   $ e^{\hat{p}} e^{ x}$ \\\hline
	\text{Framed One-leg  vertex}&  $z=e^{{x}}$
	 	&  $ e^{f \hat p} e^{ x}$ \\\hline
	\text{Gromov--Witten theory of $\mathbb P[r]$} &  $z=x$

	 & $Ad_{e^{\frac{x}{\hbar}\ln x-\frac{x}{\hbar}}} \big(e^{-\hat p }\big)$  \\\hline
    \text{Monotone Hurwitz} & $z=x$ &  ${x}(1+{x}\hat{p}+\hbar)$\\\hline
    \text{Grothendieck's dessins d'enfants} & $z=x$
    & ${x}+{x}\hat{p}$ \\\hline
    \hbox{\small BKP Model}&	\hbox{\small BKP coor.}   &
    \hbox{\small Lifting Operator} \\\hline
    \text{Spin Hurwitz numbers} & $z=e^{x}$
    & $e^{\frac{(\hat{p}-\hbar)^{r+1}-\hat{p}^{r+1}
    		+(-1)^r\hbar^{r+1}}{r+1}} e^x$ \\\hline
	\end{tabular}
\end{table}

\subsubsection{Simple Hurwitz numbers}
The Hurwitz numbers counts branched covers between Riemann surfaces.
It is known that generating function of simple Hurwitz numbers satisfies the KP hierarchy (see, for examples, \cite{O00,KL}).
The fermionic one-point function of this model and its dual can be obtained by using the fermionic operator form as:
$$
\Psi(z) =\sum\limits_{k=0}^\infty \frac{e^{(k^2-k) \hbar/2}}{\hbar^k  k!} z^{-k}  ,\qquad \Psi^*(z)=\sum\limits_{k=0}^\infty \frac{(-1)^ke^{-(k^2-k) \hbar/2}}{\hbar^k  k!} z^{-k}.
$$
In this case one can take the lifting operator $l_v$ and its adjoint operator $l_u^*$ as $e^{\hbar  v \partial_v} v$ and $e^{-u \hbar   \partial_u  }u  $, respectively.
Then it is direct to obtain the fermionic two-point function of this model
\begin{align*}
\qquad  \Psi(u,v) =
  \ & \frac{1}{u-v}+ \sum_{m,n\geq 0}\frac{(-1)^{n} e^{\frac{m(m+1)}{2}\hbar-\frac{n(n+1)}{2}\hbar}}{\hbar^{m+n+1}(m+n+1) m!n!}   u^{-n-1}  v^{-m-1}
\end{align*}
by solving the equation \eqref{keyequation} in our main Theorem \ref{thm:main KP} as:
\begin{align*}
 (l_u^*-l_v) \Psi(u,v) =& \frac{ue^{-\hbar}}{u e^{-\hbar} -v}+  \sum_{m,n\geq 0}   \frac{(-1)^{n}  e^{\frac{m(m+1)}{2}\hbar-\frac{n(n-1)}{2}\hbar}}{\hbar^{m+n+1}(m+n+1) m!n!} u^{-n}  v^{-m-1}\\
 &  - \frac{ve^\hbar}{u -v e^\hbar} -   \sum_{m,n\geq 0} \frac{(-1)^{n} e^{\frac{m(m-1)}{2}\hbar-\frac{n(n+1)}{2}\hbar}}{\hbar^{m+n+1}(m+n+1) m!n!} u^{-n-1}  v^{-m}\\
 =&   \Psi^*(u) \Psi(v)
\end{align*}
and matches with the result computed in \cite{KL}.

\subsubsection{Framed one-leg  vertex (mirror of $\mathbb{C}^3$)}
The topological vertex provides a way to study the open Gromov--Witten invariants of all smooth toric Calabi--Yau threefolds \cite{ADKMV}.
When restricting to one-leg vertx,
following  \cite{ADKMV, DZ}, the fermionic one-point function of this model and its dual of this model are given by
\begin{align*}
	\Psi(z)  =& \sum\limits_{k=0}^\infty \frac{e^{-\frac{\hbar}{4}(2f+1)k(k-1)}}{[k]!} z^{-k},\\
	\Psi^*(z)=&\sum\limits_{k=0}^\infty \frac{(-1)^{k}e^{-\frac{\hbar}{4}(2f+1)k(k-1)}}{[k]!} z^{-k}.
\end{align*}
where $[n]:=\sinh \frac{n \hbar}{2}  $ and $[n]!:=\prod_{j=1}^n \sinh\frac{j \hbar}{2}  $.
 In this case we have the lifting operator $l_v= e^{f\cdot\hbar v\partial_v} v$ and its adjoint operator $l_u^*=e^{-f\cdot\hbar u\partial_u} u$. Then, one can obtain the fermionic two-point function
$$
\Psi(u,v) =\frac{1}{u-v}+\sum_{m,n\ge 0}^{}\frac{(-1)^ne^{\frac{\hbar}{4}(2f+1)(m+n+1)(m-n)}}{[m+n+1][m]![n]!}u^{- n-1}v^{-m-1}
$$
by solving the equation \eqref{keyequation} as:
\begin{align*}
 &(l_u^*-l_v) \Psi(u,v) \\
 =& \frac{ue^{-f\hbar}}{u e^{-f\hbar} -v}+  \sum_{m,n\geq 0}  \frac{(-1)^ne^{\frac{\hbar}{4}2f(m+n)(m+n-1)+(m+n+1)(m-n)}}{[m+n+1][m]![n]!} u^{-n}  v^{-m-1}\\
 &  - \frac{ve^{f\hbar}}{u -v e^{f\hbar}} -   \sum_{m,n\geq 0} \frac{(-1)^ne^{\frac{\hbar}{4}2f(m(m+1)-n(n+3))+(m+n+1)(m-n)}}{[m+n+1][m]![n]!} u^{-n-1}  v^{-m}\\
 =&   \Psi^*(u) \Psi(v).
\end{align*}

\subsubsection{Gromov--Witten theory of $\mathbb P[r]$}  The fermionic representation and quantum curves for the Gromov--Witten theory of $\mathbb{P}^1$ and its orbifold generalization were studied in \cite{DMNPS, CG} and references therein.
More precisely,
The fermionic one-point function and its dual of this model are:
\begin{align*}
 \Psi(z) &=e^{\frac{z}{\hbar}\ln z-\frac{z}{\hbar}}\sum\limits_{k=0}^\infty
	 \frac{(-1)^k q^{rk}}{r^k k!} \frac{\hbar^{-\frac{z}{\hbar}-(r+1)k}}{\Gamma(\frac{z}{\hbar}+rk+\frac{1}{2})},\\
 \Psi^*(z) &=e^{-\frac{z}{\hbar}\ln z+\frac{z}{\hbar}}\sum\limits_{k=0}^\infty
	 \frac{q^{rk}}{r^k k!} \frac{\Gamma(\frac{z}{\hbar}-rk+\frac{1}{2})}{\hbar^{-\frac{z}{\hbar}+(r+1)k}}.
\end{align*}
The lifting operator and its adjoint can be chosen as (see \cite{CG}):
\begin{align*}
l_z:={e^{\frac{z}{\hbar}\ln z-\frac{z}{\hbar}}}
\big(e^{-\hbar\frac{d}{dz}}\big)
{e^{-\frac{z}{\hbar}\ln z+\frac{z}{\hbar}}},
\qquad l^*_z:={e^{-\frac{z}{\hbar}\ln z+\frac{z}{\hbar}}}
\big(e^{\hbar\frac{d}{dz}}\big)
{e^{\frac{z}{\hbar}\ln z-\frac{z}{\hbar}}}.
\end{align*}
Then the fermionic two-point function
\begin{align*}
	\Psi(u, v)=&\frac{\mathrm{e}^{\frac{v}{\hbar} \ln \frac{v}{\hbar}-\frac{v}{\hbar}}}{\mathrm{e}^{\frac{u}{\hbar} \ln \frac{u}{\hbar}-\frac{u}{\hbar}}}\sum_{d=1}^{\infty}\Bigg(\frac{q^{r d}}{d \cdot r^d \hbar^{(r+1) d}}\\
	&\qquad\qquad\cdot \sum_{k=0}^{d-1} \frac{(-1)^{k-1}}{k!(d-1-k)!} \sum_{n=1}^r \frac{\Gamma\left(\frac{u}{\hbar}+n-r(d-k)-\frac{1}{2}\right)}{\Gamma\left(\frac{v}{\hbar}+r k+n+\frac{1}{2}\right)}\Bigg),
\end{align*}
can be obtained by solving the equation \eqref{keyequation} as:
\begin{align*}
(l_u^*-l_v)\Psi(u,v)&=\frac{\mathrm{e}^{\frac{v}{\hbar} \ln \frac{v}{\hbar}-\frac{v}{\hbar}}}{\mathrm{e}^{\frac{u}{\hbar} \ln \frac{u}{\hbar}-\frac{u}{\hbar}}} \sum_{d=1}^{\infty}\Bigg(\frac{q^{r d}}{d \cdot r^d \hbar^{(r+1) d}}\\
	&\qquad\cdot \sum_{k=0}^{d-1} \frac{(-1)^{k-1}}{k!(d-1-k)!} \sum_{n=1}^r \frac{\Gamma\left(\frac{u}{\hbar}+n+1-r(d-k)-\frac{1}{2}\right)}{\Gamma\left(\frac{v}{\hbar}+r k+n-1+\frac{1}{2}\right)}\Bigg)\\
&=\Psi^*(u) \Psi(v).
\end{align*}
\subsubsection{Monotone Hurwitz numbers}
The monotone Hurwitz numbers were introduced in \cite{GGN} for studying the asymptotics of HCIZ matrix integral,
and are variations of the ordinary Hurwitz numbers.
The fermionic representation and quantum curve of this kind of Hurwitz numbers were studied in \cite{ALS, DK} and references therein.
The fermionic one-point function of this model and its dual are given by (see, for example, \cite{DK})
\begin{align*}
	\begin{aligned}
		\Psi(z)=&1+\sum_{k=1}^{\infty} \frac{\prod_{j=1}^{k-1}\frac{1}{1-j \hbar}}{k! \hbar^k} z^{- k} ,\\
		\Psi^*(z)=&1+\sum_{k=1}^{\infty} \frac{(-1)^k\prod_{j=1}^{k-1}\frac{1}{1+j \hbar}}{k! \hbar^k} z^{- k}.\\
	\end{aligned}
\end{align*}
The lifting operator and its adjoint operator can be obtained as
$$
l_z=\hat{x}(1+\hat{x}\hat{y}+\hbar)=(1+\hbar z\partial_z)z,
\qquad l^*_z=(1-\hbar z\partial_z)z.
$$
Indeed, one can check the following equation for the basis $\Phi^{mm}_k$ corresponding to a point in Sato-Grassmannian defined in \cite{ALS}
$$\hat{x}(1+\hat{x}\hat{p}+\hbar)\Phi^{mm}_k=(1+k\hbar)\Phi^{mm}_{k+1}$$
after specializing to the monotone Hurwitz number case. This way, one can apply Theorem \ref{thm:main KP} to compute the fermionic two-point function
$$
\begin{aligned}
\Psi(u,v)=&\frac{1}{u-v}+\sum_{m,n\ge 0}\frac{(-1)^nu^{-m-1}v^{-n-1}}{\hbar^{m+n+1}(m+n+1)m!n!}
\cdot \prod_{j=-m}^{n}\frac{1}{1+j\hbar}
\end{aligned}
$$
by solving the equation \eqref{keyequation} as:
\begin{align*}
(l_u^*-l_v)\Psi(u,v)=&\frac{u(u-v(1-\hbar))}{(u-v)^2}-\frac{v(u(1+\hbar)-v)}{(u-v)^2}\\
&+\sum_{m,n\ge 0}\frac{(-1)^n(1+m\hbar)\prod_{j=-m}^{n}\frac{1}{1+j\hbar}}{\hbar^{m+n+1}(m+n+1)m!n!}u^{-m}v^{-n-1}
\\
&-\sum_{m,n\ge 0}\frac{(-1)^n(1-n\hbar)\prod_{j=-m}^{n}\frac{1}{1+j\hbar}}{\hbar^{m+n+1}(m+n+1)m!n!}u^{-m-1}v^{-n}
\\
=&\Psi^*(u) \Psi(v).
\end{align*}

\subsubsection{Grothendieck's dessins d'enfants}
The Grothendieck's dessins d'enfants are certain graphs embedded in Riemann surfaces and was introduced by Grothendieck \cite{Gr}.
This combinatorial object is related to the Belyi pair and plays an important role in studying field of algebraic numbers $\bar{\mathbb{Q}}$.
Later,
the integrability of partition function of counting dessins was studied in \cite{KZ, Z19a}.
We recommend \cite{KZ} for details about this model.
The fermionic one-point function corresponding to this model is
$$
\Psi(z) =\sum\limits_{k=0}^\infty \frac{\prod_{j=0}^{k-1}(\alpha+j \hbar)(\beta+j\hbar)}{k! \hbar ^k} z^{-k}
$$
for parameters $\alpha,\beta$ and one can take the lifting operator and its adjoint operator as
\footnote{ One can check  the following equation   for the basis $\phi_k$ defined in  \cite{Z19a} :
$$
(z  + \hbar z \partial_z -k\cdot \hbar) \phi_k =   \phi_{k+1}  +\tfrac{ \textstyle \prod_{i=0}^k (\alpha-i\hbar)(\beta-i\hbar)\phi_{0}}{(k+1)!}
$$
and hence the operator  $z  + \hbar z \partial_z $ is a lifting operator.
Here our  $\hbar,\alpha,\beta$ equals to the $s,s u,sv$    in  \cite{Z19a} respectively. }
$$
l_z=x+ x \hat p = z  + \hbar z \partial_z,\qquad l^*_z=z  -\hbar  \partial_zz.
$$
By using the Theorem \ref{thm:main KP}, we can compute the fermionic two-point function, which is given by
\begin{align*}
	\begin{split}
	\Psi(u,v)=&\frac{1}{u-v}+\sum_{m,n\geq 0}\frac{(-1)^n\prod_{j=-n}^m(\alpha+j \hbar )(\beta+j \hbar )}
	{(m+n+1) m!n! \hbar^{m+n+1}} u^{-n-1} v^{-m-1},
	\end{split}
\end{align*}
and matches with the coordinates of corresponding model in the affine Grassmannian computed in \cite{Z19a}. In fact, one can check the following formula from equation \eqref{keyequation}:
\begin{align*}
(l_u^*-l_v)&\Psi(u,v)=\frac{u(u-v)+v\hbar}{(u-v)^2}-\frac{v(u-v+\hbar)}{(u-v)^2}\\
&+\sum_{m,n\geq 0}\frac{(-1)^n(u+\hbar n)\prod_{j=-n}^m(\alpha+j \hbar )(\beta+j \hbar )}
	{(m+n+1) m!n! \hbar^{m+n+1}} u^{-n-1} v^{-m-1}\\
&-\sum_{m,n\geq 0}\frac{(-1)^n(v-\hbar m-\hbar)\prod_{j=-n}^m(\alpha+j \hbar )(\beta+j \hbar )}
	{(m+n+1) m!n! \hbar^{m+n+1}} u^{-n-1} v^{-m-1}\\
=&\Psi^*(u) \Psi(v).
\end{align*}

\subsubsection{Spin Hurwitz numbers with completed cycles}
The spin Hurwitz numbers with completed $(r+1)$-cycles were introduced and studied in \cite{EOP, GKL}.
It was computed by Wang and the second- and third-named authors that their BKP-affine coordinates are given by (see equation (66) in \cite{JWY}),
\begin{equation*}
	\begin{split}
		&a_{0,n}^{(r,\vartheta)} = -a_{n,0}^{(r,\vartheta)} =
		\frac{\hbar^{-n}}{ 2\cdot n!}\cdot \exp\big( \hbar^r\frac{n^{r+1}}{r+1} \big),
		\qquad \forall n>0,\\
		&a_{n,m}^{(r,\vartheta)} = \frac{\hbar^{-m-n}}{4\cdot m!\cdot n!} \cdot \frac{m-n}{m+n} \cdot
		\exp \big( \hbar^r\frac{m^{r+1} + n^{r+1}}{r+1} \big),
		\qquad \forall m,n>0,
	\end{split}
\end{equation*}
where we have replaced the notations $p$ and $\beta$ in \cite{JWY} by $\hbar^{-1}$ and $\hbar^r$, respectively, corresponding to the Riemann--Hurwitz formula.
Thus,
as a tau-function of the BKP hierarchy,
the generating function of spin Hurwitz numbers with completed cycles has the first and second fermionic one-point functions as
\begin{align*}
	\Psi^B(z)=&
	\frac{1}{2}
	+\sum_{n>0}(-1)^{n}
	\frac{\hbar^{-n}}{2\cdot n!} \cdot
	\exp \big( \hbar^r \frac{n^{r+1}}{r+1} \big)
	\cdot z^{-n},\\
	\tilde{\Psi}^B(z)=&
	-z+\frac{\hbar^{-1}}{2}\exp \big( \frac{\hbar^r}{r+1} \big)\\
	&\qquad-\sum_{n>0}(-1)^{n}
	\frac{\hbar^{-n-1}(n-1)}{2\cdot (n+1)!} \cdot
	\exp \big( \hbar^r\frac{n^{r+1} + 1}{r+1} \big)
	\cdot z^{-n},
\end{align*}
and the fermionic two-point function is given as
\begin{align*}
\Psi^B(u,v)=&\frac{u-v}{2(u+v)}
+\sum_{n>0}(-1)^{n+1}\frac{e^{\frac{\hbar^r n^{r+1}}{r+1}}}{2\hbar^n n!}(u^{-n}-v^{-n})\\
&+\sum_{m,n>0}\frac{(-1)^{m+n+1}(m-n)\hbar^{-m-n}e^{h^r\frac{m^{r+1}+n^{r+1}}{r+1}}}{2m!n!(m+n)}u^{-n}v^{-m}.
\end{align*}
The lifting operator for this model could be (see Theorem 5.1 in \cite{JWY})
\begin{align*}
	l_z
	=\exp\Big({(-1)^r\frac{\hbar^r}{r+1}}\Big)
	\cdot\exp\Big(- \frac{\hbar^r}{r+1}\cdot \sum_{i=0}^{r}z
	\big(z\frac{\partial}{\partial z}\big)^iz^{-1}\big(z\frac{\partial}{\partial z}\big)^{r-i}\Big)z.
\end{align*}
Since $\iota(l_z)=l_z$,
the anti-symmetrization of this lifting operator is itself,
i.e., $\tilde{l}_z=l_z$.
Using the above data,
one can directly check
\begin{align*}
	(l_u&+l_v)\cdot\Psi^B(u,v)=\sum_{n\ge0}\frac{(-1)^ne^{n\frac{\hbar^rn^r}{r+1}}}{2\cdot n!\hbar^n}(uv^{-n}-vu^{-n})\\
&+\sum_{m,n>0}\frac{(-1)^{m+n+1}e^{\frac{\hbar^r (n^{r+1}+m^{r+1})}{r+1}}}{4\hbar^{m+n+1}n!m!}\left(\frac{e^{\frac{\hbar^r n^{r}(m-1)}{r+1}}}{m+1}-\frac{e^{\frac{\hbar^r m^{r}(n-1)}{r+1}}}{n+1}\right)u^{-n}v^{-m}\\
=&\Psi^B(u)\tilde{\Psi}^B(v)-\Psi^B(v)\tilde{\Psi}^B(u),
\end{align*}
which matches with the equation \eqref{eqn:BKPmain} in Theorem \ref{thm:B main}.

	\renewcommand{\refname}{Reference}
	\bibliographystyle{plain}

	\vspace{30pt} \noindent
\end{document}